\newtheorem{prop}{Proposition}
\newtheorem{remark}{Remark}
\newtheorem{lemma}{Lemma}
\newtheorem{theorem}{Theorem}
\newcommand{\cb}[1]{{\color{black} {#1}}}
\newcommand{\calderBold}{}
\newcommand{\secref}[1]{\S\ref{#1}}
\newcommand{\Tr}{\mbox{\rm Tr}}
\newcommand{\mm}{{N}}
\newcommand{\m}{{\mm}}
\newcommand{\nor}{\frac{1}{\sqrt{\mm}}}
\newcommand{\A}{{\Phi}}
\newcommand{\has}{{{\hat{\as}}}}
\newcommand{\n}{{\cal C}}
\newcommand{\as}{{\alpha}}
\newcommand{\ignore}[1]{}
\newcommand{\F}{\mathbb{F}_2^m}
\newcommand{\B}{{\A}}
\newcommand{\nc}{\left[\n\right]}
\title{\cb{Reed Muller Sensing Matrices and the LASSO}}
\author{Robert Calderbank\thanks{Department of Electrical Engineering and Department of Mathematics. Princeton University. \textit{calderbk@math.princeton.edu}. }
 and
 Sina Jafarpour\thanks{Department of Computer Science. Princeton University. \textit{sina@cs.princeton.edu}.}
 \thanks{The work of R. Calderbank and S. Jafarpour is supported in part by NSF under grant DMS 0701226, by ONR under grant N00173-06-1-G006, and by AFOSR under grant FA9550-05-1-0443.}
}
\begin{document}
\maketitle

\begin{abstract}
\boldmath
\textbf{\cb{
We construct two families of deterministic sensing matrices where the columns are obtained by exponentiating codewords in the quaternary Delsarte-Goethals code $DG(m,r)$. This method of construction results in sensing matrices with low coherence and spectral norm. The first family, which we call Delsarte-Goethals frames, are $2^m$ - dimensional tight frames with redundancy $2^{rm}$. The second family, which we call Delsarte-Goethals sieves, are obtained by subsampling the column vectors in a Delsarte-Goethals frame. Different rows of a Delsarte-Goethals sieve may not be orthogonal, and we present an effective algorithm for identifying all pairs of non-orthogonal rows. The pairs turn out to be duplicate measurements and eliminating them leads to a tight frame.
Experimental results suggest that all $DG(m,r)$ sieves with $m\leq 15$ and $r\geq2$ are tight-frames; there are no duplicate rows. For both families of sensing matrices, we measure accuracy of reconstruction (statistical $0-1$ loss) and complexity (average reconstruction time) as a function of the sparsity level $k$. Our results show that DG frames and sieves outperform random Gaussian matrices in terms of noiseless and noisy signal recovery using the LASSO.
}}
\end{abstract} 
\begin{IEEEkeywords}Compressed Sensing, Reed-Muller Codes, Delsarte-Goethals Set, Random Sub-dictionary, LASSO
\end{IEEEkeywords}

\section{Introduction}
\label{sec:intro}
The central goal of compressed sensing is to capture attributes of a signal using very few measurements. In most work to date, this broader objective is exemplified by the important special case in which the measurement data constitute a vector $f=\A\as+e$, where $\A$ is an $\mm\times\n$ matrix called the \textit{sensing matrix}, $\as$ is a signal in $\mathbb{C}^\n$, that is well-approximated by a $k$-sparse vector (a signal with at most $k$ non-zero entries), and $e$ is additive measurement noise. 

The role of random measurement in compressive sensing (see \cite{CRT1} and \cite{Donoho}) can be viewed as analogous to the role of random coding in Shannon theory. Both provide worst-case performance guarantees in the context of an adversarial signal/error model. In the standard paradigm, the measurement matrix is required to act as a near isometry on all $k$-sparse signals (this is the Restricted Isometry Property or RIP introduced in \cite{CT}). It has been shown that if a sensing matrix satisfies the RIP property then Basis pursuit \cite{CRT1,CRT2} programs can be used to estimate the best $k$-term approximation of  any signal in $\mathbb{C}^\n$, measured in the presence of any $\ell_2$ norm bounded measurement noise \cite{best}. 

It is known that certain probabilistic processes generate sensing matrices that for $k=O(\mm)$ satisfy $k$-RIP with high probability (see \cite{Baraniuk07}). This is significantly different from the best known results for deterministic sensing matrices \cite{Devore} where $k$-RIP is known only for $k=O(\sqrt{\mm})$. We normalize the columns of a sensing matrix to have unit $\ell_2$ - norm and define the worst case coherence $\mu$ to be the maximum absolute value of an inner product of distinct columns. It follows from the Welch bound \cite{stromher} that $\mu \ge O\left(\frac{1}{\sqrt{\mm}}\right)$. When $\mu = O\left(\frac{1}{\sqrt{N}}\right)$ it then follows from the Gerschgorin Circle Theorem \cite{Nowak} that the sensing matrix satisfies $k$-RIP with $k = O\left(\mu^{-1}\right)$. In general however no polynomial-time algorithm is known for verifying that a sensing matrix with the worst-case coherence $\mu$ satisfies $k$-RIP with $k= \Omega\left(\mu^{-1}\right)$.

 The RIP property is not an end in itself. It provides guarantees for a particular method of signal reconstruction, but there is significant interest in structured sensing matrices and alternative reconstruction algorithms. One example is the adjacency matrices of expander graphs \cite{sina, BGIST} where it is known to be impossible to satisfy RIP with respect to the $\ell_2$ norm \cite{negative}. Sparse signal recovery is still possible with Basis Pursuit since the adjacency matrix acts like a near isometry on k-sparse signals with respect to the $\ell_1$ norm. However error estimates are looser than corresponding estimates for random sensing matrices and resilience to measurement noise is limited to sparse noise vectors.
 
 The coherence between rows of a sensing matrix is a measure of the new information provided by an additional measurement. The coherence between columns of a sensing matrix is fundamental to deriving performance guarantees for reconstruction algorithms such as Basis Puruit. There are two fundamental measures of coherence: The worst-case coherence $\mu$ which measures the maximal coherence between the columns of the sensing matrix, and the spectral norm $\|\A\|_2$ which measures 
the maximal coherence between the rows of the frame. The ideal case is when worst case coherence between columns matches the Welch bound $\left(\mu = O\left(\frac{1}{\sqrt{\mm}}\right)\right)$ and different measurements are orthogonal. Then, with high probability a $k$-sparse vector has a unique sparse representation \cite{tropp10}, and this representation can be efficiently recovered using a LASSO program \cite{CP07}. Section~\secref{sec:bg} introduces notation and reviews prior work on the geometry of sensing matrices and the performance of the LASSO reconstruction algorithm.

In this paper we consider sensing matrices based on the $\mathbb{Z}_4$-linear representation of Delsarte Goethals codes. The columns are obtained by exponentiating codewords in the quaternary Delsarte-Goethals code; they are uniformly and very precisely distributed over the surface of an $\mm$-dimensional sphere. Coherence between columns reduces to properties of these algebraic codes. Section~\secref{sec:bg} reviews the construction of Delsarte-Goethals (DG) sets of $\mathbb{Z}_4$-linear quadratic forms which is the starting point for the construction of the corresponding codes; each quadratic form determines a codeword where the entries are the values taken by quadratic form. Section~\secref{sec:dg} introduces Delsarte-Goethals frames and Delsarte-Goethals sieves; the columns of these sensing matrices are obtained by exponentiating DG codewords. We then determine the worst case coherence and spectral norm for these sensing matrices. 

Cand\`es and Plan \cite{CP07} specified coherence conditions under which a LASSO program will successfully recover a $k$-sparse signal when the k non-zero entries are above the noise variance. We use these results to provide an average case error analysis for stochastic noise in both the data and measurement domains. The Delsarte Goethals (DG) sensing matrices are essentially tight frames so that white noise in the data domain maps to white noise in the measurement domain.

Section~\secref{sec:exp} presents the results of numerical experiments that compare DG frames and sieves with random Gaussian matrices of the same size. The SpaRSA package \cite{sparsa} is used to implement the LASSO recovery algorithm in all cases. DG frames and sieves outperform random matrices in terms of probability of successful sparse recovery but reconstruction time for the DG sieve is greater than that for the other sensing matrices. We remark that there are alternative fast reconstruction algorithms that exploit the structure of DG sensing matrices. \cb{The witnessing algorithm proposed in \cite{isit1} requires less storage, provides support-localized detection, and does not require independence among the support entries. On the other hand, LASSO reconstruction tends to be more robust to noise in the data domain.}

\section{Background and Notation}
\label{sec:bg}
{
This Section introduces notation and reviews  the theory of sparse reconstruction.
\subsection{Notation} Given a vector $v=(v_1,\cdots,v_n)$ in $\mathbb{R}^n$, $\|v\|_2$ denotes the Euclidean norm of $v$, and $\|v\|_1$ denotes the $\ell_1$ norm of $v$ defined as $\|v\|_1\doteq \sum_{i=1}^n |v_i|$. We further define $\|v\|_\infty \doteq \max\left\{|v_1|,\cdots,|v_n| \right\}$, and $\|v\|_{\min} \doteq \min\left\{|v_1|,\cdots,|v_n| \right\}$. Also the Hamming weight of $v$ is defined as  $\|v\|_0\doteq\{i: v_i\neq 0\}$. Whenever clear from the context, we drop the subscript from the $\ell_2$ norm. Also $v_{i\rightarrow j}$ denotes the vector $v$ restricted to entries $i,i+1,\cdots,j$, that is $v_{i\rightarrow j}\doteq (v_i,v_{i+1},\cdots,v_j).$

Let $A$ be a matrix with rank $r$. We denote the conjugate transpose of $A$ by $A^\dag$. Let $\boldsymbol{\sigma}=[\sigma_1,\cdots,\sigma_r]$ denote the vector of the singular values of $A$. The spectral norm $\|A\|$ of a matrix $A$ is the largest singular value of $A$: that is $\|A\|\doteq \|\boldsymbol{\sigma}\|_{\infty}.$ The condition number of $\A$ is the ratio between its largest and its smaller singular values: $\varsigma(A)\doteq\frac{ \|\boldsymbol{\sigma}\|_\infty}{ \|\boldsymbol{\sigma}\|_{\min}}.$ Finally the nuclear norm of $A$, denoted as $\|A\|_1$ is the $\ell_1$ norm of the singular value vector $\boldsymbol{\sigma}$.

Throughout this paper we shall use the notation $\varphi_j$ for the $j^{th}$ column of the sensing matrix $\A$; its entries will be denoted by $\varphi_j(x)$, with the row label $x$ varying from $0$ to $\mm-1$. In other words, $\varphi_j(x)$ is the entry of $\A$ in row $x$ and column $j$. We denote the set $\{1,\cdots.\n\}$ by $\nc$. Let $S$ be a subset of $\nc$. $\A_S$ is obtained by restricting $\A$ to those columns that are listed in $S$.

A vector $\as\in\mathbb{R}^\n$ is $k$-sparse if it has at most $k$ non-zero entries. The support of the $k$-sparse vector $\as$, denoted by $\mbox{Supp}(\as)$, contains the indices of the non-zero entries of $\as$.  Let $\pi=\{\pi_1,\cdots,\pi_\n\}$ be a uniformly random permutation of $\nc$. In this paper, our focus is on the average case analysis, and we always assume that $\as$ is a $k$-sparse signal with $\mbox{Supp}(\as)=\{\pi_1,\cdots,\pi_k\}$. We further assume that conditioned on the support, the values of the $k$ non-zero entries of $\as$ are sampled from a distribution which is absolutely continuous with respect to the Lebesgue measure on $\mathbb{R}^k$.
}
{
\subsection{Incoherent Tight Frames}
An $\mm\times \n$ matrix $\A$ with normalized columns is called a dictionary. A dictionary is a tight-frame with redundancy $\frac{\n}{\mm}$ if for every vector $v\in \mathbb{R}^\n$, $\|\A v\|^2=\frac{\n}{\mm}\,\|v\|^2$. If $\A\A^\dag=\frac{\n}{\mm}{\rm I}_{\mm\times\mm}$, then $\A$ is a tight-frame with redundancy $\frac{\n}{\mm}$ (see \cite{strip}).

\begin{prop}
\label{bg:prop1}
Let $\A$ be an $\mm\times\n$ dictionary. Then $\|\A\|^2\geq \frac{\n}{\mm}$, and equality holds if and only if $\A$ is a tight frame with redundancy $\frac{\n}{\mm}$.
\end{prop}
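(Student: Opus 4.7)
The plan is to reduce everything to an eigenvalue statement about the Gram-type matrix $\A\A^\dag$ and use the trace as the invariant that ties column norms to the spectrum.

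First I would recall that $\|\A\|^2$ equals the largest eigenvalue of the Hermitian positive semi-definite matrix $\A\A^\dag \in \CC^{\mm\times\mm}$. Next I would compute the trace two ways: on the one hand $\Tr(\A\A^\dag)$ equals the sum of the $\mm$ (non-negative) eigenvalues of $\A\A^\dag$; on the other hand, using the cyclic property, $\Tr(\A\A^\dag)=\Tr(\A^\dag\A)=\sum_{j=1}^{\n}\|\varphi_j\|^2=\n$, where the last equality is the unit-norm column hypothesis on the dictionary $\A$.

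From these two observations the inequality is immediate: if $\lambda_1\ge\lambda_2\ge\cdots\ge\lambda_{\mm}\ge 0$ are the eigenvalues of $\A\A^\dag$, then
\[
\|\A\|^2 \;=\; \lambda_1 \;\ge\; \frac{1}{\mm}\sum_{i=1}^{\mm}\lambda_i \;=\; \frac{\Tr(\A\A^\dag)}{\mm} \;=\; \frac{\n}{\mm}.
\]
For the equality characterization I would argue in both directions. If $\A$ is a tight frame with redundancy $\n/\mm$, then by the cited characterization $\A\A^\dag=\frac{\n}{\mm}\,\mathrm{I}_{\mm\times\mm}$, so every eigenvalue equals $\n/\mm$ and $\|\A\|^2=\n/\mm$. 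Conversely, if $\|\A\|^2=\lambda_1=\n/\mm$, then since $\lambda_1$ is the maximum of the $\lambda_i$'s and their average also equals $\n/\mm$, all eigenvalues must coincide with $\n/\mm$; being Hermitian, $\A\A^\dag$ is diagonalizable with all eigenvalues equal, hence $\A\A^\dag=\frac{\n}{\mm}\mathrm{I}_{\mm\times\mm}$, which is exactly the tight-frame condition.

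There is no real obstacle here; the only thing to be careful about is not to confuse the two Gram matrices $\A^\dag\A$ (size $\n\times\n$) and $\A\A^\dag$ (size $\mm\times\mm$). Since $\n>\mm$ in the regime of interest, $\A^\dag\A$ is rank-deficient and its average eigenvalue $\mm/\n$ gives no useful bound; the argument must be run on $\A\A^\dag$, where all $\mm$ eigenvalues are typically nonzero and the averaging inequality is tight precisely for tight frames.
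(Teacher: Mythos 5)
Your proof is correct and follows essentially the same route as the paper: bound the top eigenvalue of $\A\A^\dag$ below by the average of its eigenvalues, identify $\Tr(\A\A^\dag)=\n$ from the unit-norm columns, and note that equality forces all eigenvalues to coincide, i.e.\ $\A\A^\dag=\frac{\n}{\mm}\mathrm{I}$. Your write-up is slightly more explicit about the trace computation and the converse direction, but the argument is the same.
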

\begin{IEEEproof}
Let  Let $\boldsymbol{\sigma}$ be the singular value vector of $\A$. We have
\begin{equation}\label{bg:eq1} \|\A\|^2= \|\boldsymbol{\sigma}\|_{\infty}^2 \geq \frac{1}{\mm} \sum_{i=1}^{\mm} \sigma_i^2=  \frac{1}{\mm} {\rm Tr}\left(\A\A^\dag\right)=\frac{\n}{\mm}.
\end{equation}
The inequality in Equation~\eqref{bg:eq1} changes to equality if and only if all the eigenvalues of $\A\A^\dag$  are equal to $\frac{\n}{\mm}$. This is equivalent to the requirement $\A\A^\dag=\frac{\n}{\mm}{\rm I}_{\mm\times\mm}.$
\end{IEEEproof}
}
{
The mutual coherence between the columns of an $\mm\times \n$ sensing matrix is defined as 
\begin{equation}
\label{bg:eq2}
\mu\doteq \max_{i\neq j} \left|\varphi_i^\dag \varphi_j\right|.
\end{equation}
Strohmer and Heath \cite{stromher} showed that the mutual coherence of any $\mm\times \n$ dictionary is at least $\frac{1}{\sqrt{\mm}}$. Designing dictionaries with small spectral norms (tight frames in the ideal case), and with small coherence $\left(\mu=O\left(\frac{1}{\sqrt{\mm}}\right)\mbox{ in the ideal case}\right)$ is useful in compressed sensing for the following reasons.}
{\\
 \textbf{Uniqueness of Sparse Representation ($\boldsymbol{\ell_0}$ minimization)}
The following results are due to Tropp \cite{tropp10} and show that with overwhelming probability the $\ell_0$ minimization program successfully recovers the original $k$-sparse signal.
\begin{theorem}
\label{bg:prop1-2}
 Assume the dictionary $\A$ satisfies $\mu\leq \frac{c}{\log \n}$, where $c$ is an absolute constant. Further assume $k \leq \frac{c\, \n}{\|\A\|^2 \log\n}$. Let $S$ be a random subset of $\nc$ of size $k$, and let $\A_S$ be the corresponding $\mm\times k$ submatrix. Then there exists an absolute constant $c_0$
$$\Pr\left[\left\|\A_S^\dag\A_S-I\right\|\geq c_0\left(\mu \log \n + 2\sqrt{\frac{\|\A\|^2\,k}{\n}} \right) \right]\leq 2\, \n^{-1}.
$$
\end{theorem}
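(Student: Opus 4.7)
The plan is to follow Tropp's framework for random subdictionaries (which the theorem is quoted from), controlling the off-diagonal Gram matrix under a uniformly random column selection. Write $H \doteq \A^\dag \A - I$, so $H$ is the hollow Gram matrix with zero diagonal and entry $H_{ij} = \varphi_i^\dag \varphi_j$ off the diagonal. Then the quantity to be bounded is exactly the restriction $R_S H R_S^\dag$, where $R_S\in\{0,1\}^{k\times\n}$ is the coordinate-selection operator onto $S$. The two ingredients on hand are the entrywise bound $|H_{ij}|\le \mu$ and the spectral bound $\|H\| \le \|\A\|^2 + 1$, and we need to combine them to produce the two-term bound in the theorem.

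First I would replace the uniformly random size-$k$ subset by the Bernoulli selector model in which each index $i\in\nc$ is included independently with probability $\delta = k/\n$; the usual conditioning/contraction argument of Tropp shows this changes constants only. In this model $R_S H R_S^\dag$ becomes a sum
\[
\sum_{i \neq j} \eta_i \eta_j H_{ij}\, e_i e_j^\dag,
\]
with $\eta_i$ independent Bernoulli$(\delta)$. Decoupling (Bourgain-Tzafriri, or the Ledoux-Talagrand decoupling lemma) lets us replace one copy of $\eta$ by an independent copy $\eta'$, after which symmetrization introduces independent Rademacher signs. What remains is a Rademacher sum of matrices which can be bounded by the noncommutative Khintchine inequality, or equivalently by Tropp's matrix moment/concentration machinery.

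Applying the noncommutative Khintchine (or the Rudelson-Vershynin variant) produces two terms: one of order $\mu\sqrt{\log \n}\cdot\sqrt{\log k}$, absorbed into $\mu\log \n$ in the statement, and one of order $\sqrt{\delta \|H\|^2} = \sqrt{k\|\A\|^2/\n}$ coming from the largest column norm of $H$, namely $\sqrt{\delta}\,\|H\|$. Taking expectations yields a bound on $\Ex\|\A_S^\dag \A_S - I\|$ of exactly the form $c_0(\mu\log \n + 2\sqrt{k\|\A\|^2/\n})$. The high-probability statement then follows by concentration: $\|R_S H R_S^\dag\|$ is a convex $1$-Lipschitz function of the independent selectors $\eta_i$ (for the Bernoulli model), so Talagrand's convex concentration inequality gives exponential concentration around its mean, and one tunes the deviation so that the failure probability is at most $2\n^{-1}$.

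The main obstacle is the matrix concentration step: the naive scalar approach (union bound on entries) gives only $\mu\sqrt{k}$ type bounds, which are far too large when $k$ is near $\n/\|\A\|^2$. The right tool is a non-commutative moment inequality tuned to capture both the entrywise size $\mu$ and the variance parameter $\|\A\|^2 k/\n$, and transferring this tail bound from the Bernoulli model back to the uniform size-$k$ model without losing the constants in the exponent is the delicate step. Once that is in place, the two terms in the bound correspond cleanly to the coherence contribution and the spectral contribution, and the probability $2\n^{-1}$ is obtained by matching the exponent in the tail inequality to $\log \n$.
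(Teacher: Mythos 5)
The paper itself offers no proof of this theorem --- it is quoted directly from Tropp's work on random subdictionaries \cite{tropp10} --- so the only meaningful comparison is with Tropp's original argument, and your outline reproduces its architecture faithfully: pass from the uniform size-$k$ model to independent Bernoulli selectors, decouple, symmetrize, apply noncommutative Khintchine/Rudelson to the resulting Rademacher matrix sum, and convert the two-term expectation bound into a tail bound. Two small corrections: the second term comes from $\sqrt{\delta}$ times the largest \emph{column} norm of $H=\A^\dag\A-I$, which is at most $\|\A\|$, so the right intermediate quantity is $\sqrt{\delta}\,\|H\|_{1\to 2}\le\sqrt{k\|\A\|^2/\n}$ rather than ``$\sqrt{\delta\|H\|^2}$'' with $\|H\|$ the spectral norm (that would give $\sqrt{\delta}\,\|\A\|^2$, which is too large by a factor of $\|\A\|$); and the moment bound also produces a third term $\delta\|H\|\approx k\|\A\|^2/\n$, which you should note is dominated by the second term precisely because of the hypothesis $k\le c\,\n/(\|\A\|^2\log\n)$. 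Tropp gets the $2\,\n^{-1}$ failure probability by running the entire chain at the level of $q$-th moments with $q\asymp\log\n$ and then applying Markov's inequality, rather than via Talagrand's convex-distance inequality; your route is a legitimate alternative but requires checking convexity and the Lipschitz constant of $\eta\mapsto\|R_S H R_S^\dag\|$, which is the less standard part of your sketch.
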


\begin{theorem}
\label{bg:prop2}
Assume the dictionary $\A$ satisfies $\mu\leq \frac{c}{\log \n}$, where $c$ is an absolute constant. Further assume $k \leq \frac{c\, \n}{\|\A\|^2 \log\n}$. Let $\as$ be a $k$-sparse vector, such that the support of the $k$ nonzero coefficients of $\as$ is selected uniformly at random. Then with probability $1-O\left(\n^{-1}\right)$ $\as$ is the unique $k$-sparse vector mapped to $u= \A\as$ by the measurement matrix $\A$.
\end{theorem}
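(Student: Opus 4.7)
My plan is to combine Theorem~\ref{bg:prop1-2} with the absolute continuity of the nonzero entries of $\as$. I first apply Theorem~\ref{bg:prop1-2} with the random support $S=\{\pi_1,\ldots,\pi_k\}$ to conclude that, except on an event of probability at most $2\n^{-1}$, the Gram matrix $\A_S^\dag\A_S$ is close to the identity. On this favorable event $\A_S$ has full column rank, so $\as$ is the unique vector supported on $S$ with $\A\beta=\A\as$, and $\|(\A_S^\dag\A_S)^{-1}\|\le 2$.

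The main work is to rule out a competing $k$-sparse $\as'$ with support $T\ne S$. If such a vector existed, then $\as-\as'$ would be a nonzero element of $\ker\A$ supported on $R:=S\cup T$, of size at most $2k$. Conditioning on the favorable event for $\A_S$ and on $S$, the map $\as_S\mapsto\A_S\as_S$ is a bijection from $\mathbb{R}^k$ onto $\mathrm{range}(\A_S)$. For each fixed $T\ne S$ with $|T|=k$, the set of $\as_S$ for which $\A\as\in\mathrm{range}(\A_T)$ pulls back under this bijection to a subspace of $\mathbb{R}^k$ of dimension $\dim(\mathrm{range}(\A_S)\cap\mathrm{range}(\A_T))$. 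This subspace is proper in $\mathbb{R}^k$---and therefore Lebesgue-null, contributing zero probability under the absolute continuity hypothesis---unless $\mathrm{range}(\A_S)=\mathrm{range}(\A_T)$, which forces $\A_R$ to be rank-deficient.

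To exclude the rank-deficient case, I would invoke Theorem~\ref{bg:prop1-2} a second time with sparsity parameter $2k$ (valid after absorbing a constant factor into $c$), which gives well-conditioning, and hence full column rank, of a uniformly random $2k$-subdictionary with probability at least $1-2\n^{-1}$. The subtle issue is that $R=S\cup T$ is not uniformly distributed over $2k$-subsets, since $T$ may depend on the signal which depends on $S$. I would handle this by a union bound over possible $T$, exploiting the exchangeability of the random permutation $\pi$ to reduce the analysis to a bound on a single ``typical'' $2k$-subdictionary. I expect this probabilistic bookkeeping---transferring the guarantee from a uniformly random $2k$-subset to the specific set $S\cup T$---to be the principal obstacle; the linear-algebraic and measure-theoretic ingredients themselves are routine.
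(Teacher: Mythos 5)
The paper itself offers no proof of this statement---it is quoted from Tropp \cite{tropp10}---so I can only measure your argument against the standard one. Your first two steps are sound. Well-conditioning of $\A_S$ via Theorem~\ref{bg:prop1-2} gives injectivity on the true support, and your observation that, for each fixed competing support $T$, the set of coefficient vectors with $\A_S\as_S\in\mathrm{range}(\A_T)$ pulls back to a \emph{proper} subspace of $\mathbb{R}^k$ unless $\mathrm{range}(\A_S)\subseteq\mathrm{range}(\A_T)$ disposes of all such $T$ simultaneously: a finite union of proper subspaces is Lebesgue-null, so under the absolute-continuity hypothesis it carries probability exactly zero, with no union bound over probabilities required.

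The gap is in your last step, and it is genuine. To exclude supports $T$ with $\mathrm{range}(\A_S)\subseteq\mathrm{range}(\A_T)$ you propose applying Theorem~\ref{bg:prop1-2} at sparsity $2k$ and then transferring the guarantee from a uniformly random $2k$-set to $S\cup T$ for \emph{every} $T$ by a union bound. This cannot close: there are $\binom{\n}{k}$ candidate sets $T$, the conditioning bound fails with probability of order $\n^{-1}$ for a single random $2k$-set, and exchangeability does not rescue the argument because the only randomness is in $S$ while $T$ is chosen adversarially afterwards---for a fixed deterministic dictionary a specific $2k$-column submatrix can genuinely be singular. Moreover you are trying to prove more than you need. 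If $\A_S$ has full column rank, $|T|=k$, $T\neq S$ and $\mathrm{range}(\A_S)\subseteq\mathrm{range}(\A_T)$, then the two ranges coincide and any $j\in T\setminus S$ satisfies $\varphi_j\in\mathrm{range}(\A_S)$, i.e.\ $\|P_S\varphi_j\|_2=1$ for the orthogonal projector $P_S$ onto $\mathrm{range}(\A_S)$. Hence it suffices to rule out single-column extensions: since $\|P_S\varphi_j\|_2^2\leq\left\|\left(\A_S^\dag\A_S\right)^{-1}\right\|\,\left\|\A_S^\dag\varphi_j\right\|_2^2\leq 2\left\|\A_S^\dag\varphi_j\right\|_2^2$ on the good event, one only needs $\max_{j\notin S}\|\A_S^\dag\varphi_j\|_2^2<\nicefrac{1}{2}$, a union bound over merely $\n-k$ columns that follows deterministically when $k\mu^2<\nicefrac{1}{2}$ and in general from the same large-deviation estimates underlying Theorem~\ref{bg:prop1-2}. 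Replacing your $2k$-set argument with this column-wise reduction closes the proof.
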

\textbf{Sparse Recovery via LASSO ($\boldsymbol{\ell_1}$ minimization)}
Uniqueness of sparse representation is of limited utility given that $\ell_0$ minimization is computationally intractable. However, given modest restrictions on the class of sparse signals, Cand\`es and Plan  \cite{CP07} have shown that with overwhelming probability the solution to the $\ell_0$ minimization problem coincides with the solution to a convex lasso program.

\begin{theorem}
\label{bg:prop3} Assume the dictionary $\A$ satisfies $\mu\leq \frac{c}{\log \n}$, where $c$ is an absolute constant. Further assume $k \leq \frac{c_1\, \n}{\|\A\|^2 \log\n}$, where $c_1$ is a numeric constant. Let $\as$ be a $k$-sparse vector, such that 
\begin{enumerate}\item The support of the $k$ nonzero coefficients of $\as$ is selected uniformly at random.\item Conditional on the support, the signs of the nonzero entries of $\as$ are independent and equally likely to be $-1$ or $1$. 
\end{enumerate}
Let $u=\A\as+e$, where $e$ contains $\mm$ iid ${\cal N}(0,\sigma^2)$ Gaussian elements.  Then if $\|\as\|_{\min}\geq 8\sigma\,\sqrt{2\log\n}$, with probability $1-O(\n^{-1})$ the lasso estimate $$\as^*\doteq\arg \min_{\as^+\in \mathbb{R}^{\n}}\frac{1}{2} \|u-\A\as^+\|^2  + 2\,\sqrt{2\log \n}\,\sigma^2\, \|\as^+\|_1$$ has the same support and sign as $\as$, and $\|\A\as-\A\as^*\|^2\leq c_2\, k \,\sigma^2$, where $c_2$ is a numeric constant.
\end{theorem}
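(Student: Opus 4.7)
The plan is to follow the standard dual certificate strategy for LASSO support recovery, adapted to the random-signs model of Cand\`es--Plan. The argument has two layers: a conditioning step that invokes Theorem~\ref{bg:prop1-2} to make the Gram matrix of the active columns well conditioned, and an oracle-plus-dual-certificate step that produces a candidate optimum and certifies it via the KKT conditions.

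First I would condition on the event of Theorem~\ref{bg:prop1-2}, which in the stated coherence and sparsity regime yields $\|\A_S^\dag \A_S - I\|\le 1/2$, and in particular $\|(\A_S^\dag\A_S)^{-1}\|\le 2$, with probability $1-O(\n^{-1})$. On this event I would define the oracle solution $\widehat{\as}$ supported on $S$ by $\widehat{\as}_S = \as_S + (\A_S^\dag\A_S)^{-1}\A_S^\dag e - \lambda (\A_S^\dag\A_S)^{-1}\,\mathrm{sign}(\as_S)$, where $\lambda = 2\sqrt{2\log \n}\,\sigma^2$, and form the candidate dual vector $\bsl{v} = \lambda^{-1}\A^\dag (u - \A\widehat{\as})$. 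By construction $\bsl{v}_S=\mathrm{sign}(\as_S)$; to conclude that $\widehat{\as}$ is the unique LASSO optimum $\as^*$ it suffices to verify the strict dual feasibility $\|\bsl{v}_{S^c}\|_\infty < 1$ and to check that $\mathrm{sign}(\widehat{\as}_S)=\mathrm{sign}(\as_S)$.

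Next I would expand the off-support certificate as $\bsl{v}_{S^c} = \A_{S^c}^\dag \A_S(\A_S^\dag\A_S)^{-1}\mathrm{sign}(\as_S) + \lambda^{-1}\A_{S^c}^\dag \Pi_S^\perp e$, where $\Pi_S^\perp$ projects orthogonally to the range of $\A_S$. The first term is a Rademacher sum in the independent random signs, so I would apply Hoeffding's inequality and union-bound over the $\n-k$ off-support indices, using that each row of $\A_{S^c}^\dag\A_S(\A_S^\dag\A_S)^{-1}$ has $\ell_2$-norm controlled by $\mu\sqrt{k}\,\|(\A_S^\dag\A_S)^{-1}\|$, which is small precisely because the hypotheses force $\mu\sqrt{k}\log\n = o(1)$. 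The second term is Gaussian, and the rows of $\A_{S^c}^\dag\Pi_S^\perp$ have $\ell_2$-norm at most one, so standard Gaussian tails combined with the choice of $\lambda$ make it $o(1)$ uniformly. This step is where $\mu\le c/\log \n$ and $k\le c_1\n/(\|\A\|^2\log \n)$ are simultaneously needed, and it is the technical heart of the proof. Sign recovery on $S$ follows because the minimum-magnitude assumption $\|\as\|_{\min}\ge 8\sigma\sqrt{2\log\n}$ dominates $\|(\A_S^\dag\A_S)^{-1}\A_S^\dag e\|_\infty + \lambda\|(\A_S^\dag\A_S)^{-1}\mathrm{sign}(\as_S)\|_\infty$ after Gaussian tail bounds, so $\widehat{\as}_S$ never changes sign relative to $\as_S$.

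Finally, on the certified event the measurement-domain error reduces to $\A\as^* - \A\as = -\Pi_S e + \lambda\A_S(\A_S^\dag\A_S)^{-1}\mathrm{sign}(\as_S)$, whose squared norm is bounded by $\|\Pi_S e\|^2 + \lambda^2\,\mathrm{sign}(\as_S)^\dag(\A_S^\dag\A_S)^{-1}\mathrm{sign}(\as_S)$; the first piece is a $\chi^2_k$ variable concentrated near $k\sigma^2$, and the second is at most $2\lambda^2 k$, giving the advertised bound $c_2\, k\sigma^2$ after absorbing logarithmic factors into $c_2$. The main obstacle, and the only place where the random-sign hypothesis is essential, is the off-support dual-certificate bound: a deterministic worst-case analysis over all sign patterns would force $k = O(\mu^{-1})$, and only the Hoeffding cancellation afforded by random signs yields the much larger sparsity level $k = \Omega(\n/(\|\A\|^2\log\n))$ promised by the theorem.
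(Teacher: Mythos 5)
The paper does not prove this statement; it is quoted verbatim (up to notation) from Cand\`es and Plan \cite{CP07}, so I am judging your sketch against the argument in that reference. Your overall architecture --- condition on Tropp's invertibility event, build the oracle solution and dual certificate, verify strict dual feasibility off the support and sign consistency on it, then read off the prediction error --- is exactly the right skeleton.

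There is, however, a genuine gap at the step you yourself identify as the technical heart. You bound each row of $\A_{S^c}^\dag\A_S(\A_S^\dag\A_S)^{-1}$ by $\mu\sqrt{k}\,\|(\A_S^\dag\A_S)^{-1}\|$ and assert that the hypotheses force $\mu\sqrt{k}\,\log\n=o(1)$. They do not. From $\mu\leq c/\log\n$ and $k\leq c_1\n/(\|\A\|^2\log\n)\leq c_1\mm/\log\n$ one only gets $\mu\sqrt{k}\,\log\n\leq c\sqrt{c_1}\,\sqrt{\mm\log\n}\cdot(\log\n)^{-1}$, which grows with $\mm$; a dictionary saturating both hypotheses would defeat your Hoeffding step for any $k\gg 1/(\mu^2\log\n)$, i.e.\ for $k$ beyond roughly $\log\n$, whereas the theorem promises $k$ up to order $\mm/\log\n$. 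The deterministic worst-case bound $\|\A_S^\dag\varphi_j\|_2\leq\mu\sqrt{k}$ is precisely what must be avoided. The missing ingredient in Cand\`es--Plan is a second use of the randomness of the support: for fixed $j\notin S$ and random $S$, the quantity $\|\A_S^\dag\varphi_j\|_2^2=\sum_{i\in S}|\varphi_i^\dag\varphi_j|^2$ concentrates (via a Bernstein/McDiarmid-type inequality whose increments are controlled by $\mu^2$, which is where $\mu\leq c/\log\n$ enters) around its mean $\tfrac{k}{\n}\bigl(\|\A^\dag\varphi_j\|_2^2-1\bigr)\leq \tfrac{k}{\n}\|\A\|^2\leq c_1/\log\n$. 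Only then does Hoeffding over the random signs give $|\langle v_j,\mathrm{sign}(\as_S)\rangle|\lesssim\sqrt{c_1/\log\n}\cdot\sqrt{\log\n}=\sqrt{c_1}<1$ uniformly over $j\notin S$. This concentration-of-the-row-norms step is also the only place the spectral norm $\|\A\|^2$ enters the off-support bound, so without it your proof cannot even explain why the sparsity threshold is phrased in terms of $\n/(\|\A\|^2\log\n)$. Two smaller points: the penalty weight $2\sqrt{2\log\n}\,\sigma^2$ as printed (and as you use it) is dimensionally inconsistent with the dual-feasibility calculation --- the original has $2\sqrt{2\log\n}\,\sigma$ --- and the prediction-error bound in \cite{CP07} carries a $\log\n$ factor, so ``absorbing logarithmic factors into $c_2$'' is not legitimate if $c_2$ is to be a numeric constant; but both of these are defects inherited from the statement rather than from your argument.
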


 \textbf{Stochastic noise in the data domain.}
 The tight-frame property of the sensing matrix makes it possible to map iid Gaussian noise in the data domain to iid Gaussian noise in the measurement domain:
\begin{lemma}\label{bg:lemma1}
Let $\varepsilon$ be a vector with $\n$ iid ${\cal N}(0,\sigma_d^2)$ entries and $e$ be a vector with $\m$ iid ${\cal N}(0,\sigma_m^2)$ entries. Let $\hbar=\A\varepsilon$ and $\nu=\hbar+e$. Then $\nu$ contains $\mm$ entries, sampled iid from ${\cal N}\left(0, \sigma^2\right)$, where $\sigma^2=\frac{\n}{\mm}\sigma_d^2+\sigma_m^2$.
\end{lemma}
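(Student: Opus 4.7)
The plan is to exploit the tight-frame hypothesis on $\A$ together with standard facts about linear transformations of jointly Gaussian vectors. The key observation is that the lemma is really a statement about covariance matrices: if we can show that $\hbar = \A\varepsilon$ is jointly Gaussian with covariance matrix a scalar multiple of the identity, then $\nu = \hbar + e$ (being the sum of two independent jointly Gaussian vectors) is also jointly Gaussian with a covariance equal to the sum of the two covariances, and iid-ness of the entries follows immediately.

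First I would note that $\hbar = \A\varepsilon$ is a fixed linear transformation of the jointly Gaussian vector $\varepsilon$, so $\hbar$ is itself jointly Gaussian with mean zero. Its covariance matrix is
\begin{equation*}
\mathbb{E}[\hbar\hbar^\dag] = \A\,\mathbb{E}[\varepsilon\varepsilon^\dag]\,\A^\dag = \sigma_d^2\,\A\A^\dag.
\end{equation*}
Here the tight-frame assumption from Section~\ref{sec:bg} enters: because $\A$ is an $\mm\times\n$ tight frame with redundancy $\n/\mm$, Proposition~\ref{bg:prop1} (and the surrounding discussion) gives $\A\A^\dag = \tfrac{\n}{\mm}\mathrm{I}_{\mm\times\mm}$. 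Hence $\mathbb{E}[\hbar\hbar^\dag] = \tfrac{\n}{\mm}\sigma_d^2\,\mathrm{I}_{\mm\times\mm}$, so the entries of $\hbar$ are iid $\mathcal{N}\!\left(0,\tfrac{\n}{\mm}\sigma_d^2\right)$.

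Next I would add the independent noise $e$. Since $e$ is independent of $\varepsilon$, it is independent of $\hbar$, and the sum $\nu = \hbar + e$ is jointly Gaussian with covariance
\begin{equation*}
\mathbb{E}[\nu\nu^\dag] = \mathbb{E}[\hbar\hbar^\dag] + \mathbb{E}[ee^\dag] = \left(\tfrac{\n}{\mm}\sigma_d^2 + \sigma_m^2\right)\mathrm{I}_{\mm\times\mm}.
\end{equation*}
A zero-mean jointly Gaussian vector with a scalar covariance has iid Gaussian entries, giving the claimed distribution $\mathcal{N}(0,\sigma^2)$ with $\sigma^2 = \tfrac{\n}{\mm}\sigma_d^2 + \sigma_m^2$.

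There is no real obstacle here: the only nontrivial ingredient is the tight-frame identity $\A\A^\dag = \tfrac{\n}{\mm}\mathrm{I}$, which is assumed throughout this section, and the rest is a direct covariance computation plus the fact that joint Gaussianity plus scalar covariance implies iid components. The one place to be careful is making explicit that $\A\varepsilon$ is jointly Gaussian (not merely marginally Gaussian in each coordinate), so that diagonality of the covariance matrix actually implies independence of the coordinates; this is immediate because any linear image of a jointly Gaussian vector is jointly Gaussian.
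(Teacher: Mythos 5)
Your proof is correct and follows essentially the same route as the paper: both rest on the tight-frame identity $\A\A^\dag = \tfrac{\n}{\mm}\mathrm{I}$ to show $\mathbb{E}[\hbar\hbar^\dag] = \tfrac{\n}{\mm}\sigma_d^2\,\mathrm{I}$ and then add the independent measurement-noise covariance. You are in fact slightly more careful than the paper in spelling out that $\A\varepsilon$ is \emph{jointly} Gaussian, so that the diagonal covariance genuinely implies iid entries.
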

\begin{IEEEproof}
The tight frame property implies
 $$\mathbb{E}\left[\hbar\hbar^\dag\right]=E[\A\varepsilon\varepsilon^\dag\A^\dag]=\sigma_d^2 \A\A^\dag=\frac{\n}{\mm}\sigma_d^2\,I.$$
  Therefore, $\nu=\hbar+e$ contains iid Gaussian elements with zero mean and variance $\sigma^2$. 
\end{IEEEproof}
}
{
Next we construct two families of low-coherence tight frames from Delsarte-Goethals codes.
}

\subsection{Delsarte-Goethals Sets of Binary Symmetric Matrices}
\label{dg_set}
The finite field {\calderBold $\mathbb{F}_{2^m}$} is obtained from the binary field {\calderBold $\mathbb{F}_{2}$} by adjoining a root $\xi$ of a primitive irreducible polynomial  $g$ of degree $m$. The {\calderBold elements} of  $\mathbb{F}_{2^m}$ are  polynomials in $\xi$ of degree at most $m-1$ with coefficients in $\mathbb{F}_{2}$, and we will identify the polynomial $x_0+x_1\xi+\cdots+x_{m-1}\xi^{m-1}$ with the binary $m$-tuple $\left(x_0,\cdots,x_{m-1}\right).$ The \textit{Frobenius map} $f:\mathbb{F}_{2^m}\rightarrow \mathbb{F}_{2^m}$ is defined by $f(x)=x^2$ and the {\calderBold \textit{Trace map}} $\Tr: \mathbb{F}_{2^m} \rightarrow \mathbb{F}_{2}$ is defined by $$\Tr(x)\doteq x+x^2+\cdots+x^{2^{m-1}}.$$ 

The identity $(x+y)^2 = x^2 + y^2$ implies that $\Tr(x+y) = \Tr(x) + \Tr(y)$; the trace is a linear map over the binary field $\mathbb{F}_2$. The trace inner product given by 
$(v,w) =\Tr(vw)$ is non-degenerate; if $\Tr(vz)=0$ for all $z$ in $\mathbb{F}_2^m$ then $v=0$. Every element $a$ in $\mathbb{F}_{2^m}$ determines a symmetric bilinear form $\Tr[xya]$ to which is associated a binary symmetric matrix $P^0(a)$.
$$\Tr[xya]\doteq(x_0\cdots x_{m-1})P^0(a)(y_o\cdots y_{m-1})^\top.$$
{\calderBold The \textit{Kerdock set} $\boldsymbol{K_m}$ } is the $m$-dimensional binary vector space formed by the matrices $P^0(a)$. For example, let $m=3$, and assume the finite field $\mathbb{F}_{8}$ is generated by adjoining a root $\xi$ of the polynomial $g(x)=x^3+x+1$. Then $K_3$ is spanned by{ 
$$P^0(100)=\left(\begin{array}{ccc}1  & 0  & 0  \\0  & 0  & 1  \\0 & 1  & 0  \end{array}\right),~~P^0(010)=\left(\begin{array}{ccc}0  & 0  & 1  \\0  & 1  & 0  \\1 & 0  & 1  \end{array}\right),~\mbox{and }P^0(001)=\left(\begin{array}{ccc}0  & 1  & 0  \\1  & 0  & 1  \\0 & 1  & 1  \end{array}\right)$$}
\begin{theorem}Every nonzero matrix in $K_m$ is nonsingular.\end{theorem}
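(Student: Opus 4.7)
The plan is to prove the statement by contradiction using the non-degeneracy of the trace inner product, which has just been established in the excerpt.

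First I would set up the contrapositive: fix a nonzero $a \in \mathbb{F}_{2^m}$ and assume toward contradiction that $P^0(a)$ is singular, so there exists a nonzero binary vector $x = (x_0, \dots, x_{m-1}) \in \mathbb{F}_2^m$ with $x \, P^0(a) = 0$. Using the defining identity of $P^0(a)$, this is equivalent to
\[
\Tr[xya] = x \, P^0(a) \, y^\top = 0 \qquad \text{for every } y \in \mathbb{F}_{2^m},
\]
where I identify the binary $m$-tuple $x$ with the corresponding element $x_0 + x_1 \xi + \cdots + x_{m-1}\xi^{m-1}$ of $\mathbb{F}_{2^m}$ (and similarly for $y$).

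Next I would invoke the non-degeneracy of the trace inner product. Setting $v \doteq xa \in \mathbb{F}_{2^m}$, the displayed equation says $\Tr(vy) = 0$ for every $y \in \mathbb{F}_{2^m}$, which by the non-degeneracy statement in the excerpt forces $v = xa = 0$. But $\mathbb{F}_{2^m}$ is a field, hence has no zero divisors, and both $x$ and $a$ were assumed nonzero, giving a contradiction. Therefore $P^0(a)$ must be nonsingular whenever $a \neq 0$.

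The argument is short; the only place where care is needed is the identification of the binary vector $x$ with a field element when passing between the matrix description $x\,P^0(a)\,y^\top$ and the algebraic description $\Tr[xya]$. This identification is exactly the one fixed at the beginning of Section~\ref{dg_set}, so nothing further is required, and there is no real obstacle beyond making that dictionary explicit.
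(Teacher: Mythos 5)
Your proof is correct and follows essentially the same route as the paper: assume a nonzero null vector $x$, translate $x\,P^0(a)=0$ into $\Tr[xya]=0$ for all $y$, and apply non-degeneracy of the trace. If anything, your write-up is slightly more careful than the paper's, which passes directly from $\Tr[xya]=0$ to $a=0$ without explicitly noting the intermediate step $xa=0$ and the absence of zero divisors.
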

\begin{proof} If $xP^0(a)=0$ then $\Tr[xya]=0$ for all $y\in \mathbb{F}_{2^m}$. Now the non-degeneracy of the trace implies $a=0$.
\end{proof}

Next we define higher order bilinear forms, each associated with a binary symmetric matrix. Given a positive integer $t$ where $0<t<\frac{m-1}{2}$ and given a field element $a$ $$\Tr\left[\left(xy^{2^t}+x^{2^t}y\right)a\right]$$ defines a symmetric bilinear form that is represented by a binary symmetric matrix $P^t(a)$ as above:
\begin{equation}\label{field_eq}\Tr\left[\left(xy^{2^t}+x^{2^t}y\right)a\right]\doteq(x_0\cdots x_{m-1})P^t(a)(y_o\cdots y_{m-1})^\top\end{equation}
{\calderBold The \textit{Delsarte-Goethals set} ${DG(m,r)}$} is then defined as 
$$DG(m,r)\doteq\left\{ \sum_{t=0}^r P^t(a_t)\,|\,a_t \in \mathbb{F}_{2^m},~t=0,1,\cdots,r\right\}.$$
The Delsarte-Goethals sets are nested
$$K_m=DG(m,0)\subset DG(m,1)\subset \cdots \subset DG\left(m,\frac{m-1}{2}\right),$$
and every bilinear form is associated with some matrix in $DG\left(m,\frac{m-1}{2}\right).$

For example, let $m=3$ and $~g(x)=x^3+x+1$, the set $DG(3,1)$ is spanned by $K_3$, and {
$$P^1(100)=\left(\begin{array}{ccc}0  & 0  & 0  \\0  & 0  & 1  \\0 & 1  & 0  \end{array}\right),~~P^1(010)=\left(\begin{array}{ccc}0  & 1  & 0  \\1 & 0  & 0  \\0 & 0  & 0  \end{array}\right),~\mbox{and }P^1(001)=\left(\begin{array}{ccc}0  & 1  & 1  \\1  & 0  & 0  \\1 & 0  & 0  \end{array}\right).$$}
\begin{theorem}  Every nonzero matrix in $DG(m,r)$ has rank at least $m-2r$.
\end{theorem}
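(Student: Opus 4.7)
The plan is to translate the rank condition for a matrix $Q = \sum_{t=0}^{r} P^t(a_t)$ in $DG(m,r)$ into a statement about the number of zeros of a linearized polynomial over $\mathbb{F}_{2^m}$, and then bound those zeros by a Frobenius manipulation that controls polynomial degree.

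First I would encode the kernel. By definition the bilinear form associated with $Q$ is
$$B(x,y) = \Tr[xya_0] + \sum_{t=1}^{r}\Tr\bigl[(xy^{2^t}+x^{2^t}y)\,a_t\bigr].$$
Since the trace pairing $(u,v)\mapsto \Tr[uv]$ is non-degenerate and $Q$ is symmetric, the (left) kernel of $Q$ equals $\{x\in\mathbb{F}_{2^m}: B(x,y)=0 \text{ for all }y\}$. Using $\Tr[z]=\Tr[z^{2^k}]$, each term $\Tr[xy^{2^t}a_t]$ can be rewritten as $\Tr[y\cdot x^{2^{m-t}}a_t^{2^{m-t}}]$. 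Collecting terms,
$$B(x,y)=\Tr\bigl[y\,L(x)\bigr], \qquad L(x)\doteq a_0\,x+\sum_{t=1}^{r}\bigl(a_t\,x^{2^t}+a_t^{2^{m-t}}\,x^{2^{m-t}}\bigr).$$
Non-degeneracy of the trace then gives $\ker Q=\{x: L(x)=0\}$, so it suffices to bound $|\{x\in\mathbb{F}_{2^m}:L(x)=0\}|$ by $2^{2r}$.

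Second I would apply a Frobenius shift to collapse the exponents. Because $x^{2^m}=x$ on $\mathbb{F}_{2^m}$, raising $L$ to the $2^r$-th power gives, as a function on $\mathbb{F}_{2^m}$,
$$\widetilde L(x)\doteq L(x)^{2^r}=\sum_{t=1}^{r} a_t^{2^{m}} x^{2^{m-t+r}}+a_0^{2^r} x^{2^r}+\sum_{t=1}^{r} a_t^{2^r} x^{2^{t+r}},$$
and the exponents $m-t+r$ reduce mod $m$ to $r-t\in\{0,1,\dots,r-1\}$. Thus $\widetilde L$ agrees as a map on $\mathbb{F}_{2^m}$ with a polynomial whose exponents lie in $\{2^0,2^1,\dots,2^{2r}\}$, hence of degree at most $2^{2r}$. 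Since $x\mapsto x^{2^r}$ is a bijection on $\mathbb{F}_{2^m}$, the zero sets of $L$ and $\widetilde L$ coincide.

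Finally I would conclude. If $Q\neq 0$, then $L$ is not identically zero as a map, so $\widetilde L$ is not identically zero as a map; since its degree is at most $2^{2r}<2^m$ (using $r\le (m-1)/2$), it cannot vanish identically as a polynomial, and therefore has at most $2^{2r}$ roots in $\mathbb{F}_{2^m}$. This yields $|\ker Q|\le 2^{2r}$, i.e.\ $\dim_{\mathbb{F}_2}\ker Q\le 2r$, so $\mathrm{rank}(Q)\ge m-2r$. The main obstacle is the bookkeeping in the Frobenius step: one must verify carefully that when the exponents $m-t$ are shifted by $r$ and reduced mod $m$, the resulting linearized polynomial has only the claimed exponent range $\{0,1,\dots,2r\}$ and remains nonzero as a polynomial (not merely as a map), which is where the hypothesis $r\le (m-1)/2$ is crucial.
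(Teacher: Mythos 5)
Your proof is correct and takes essentially the same route as the paper's: both identify the null space of $\sum_{t}P^t(a_t)$ with the zero set of a $2$-linearized polynomial via non-degeneracy of the trace, then apply a Frobenius shift by $2^r$ to compress all exponents into $\{2^0,\dots,2^{2r}\}$ and conclude there are at most $2^{2r}$ roots. The only differences are cosmetic (you isolate $L(x)$ inside $\Tr[y\,L(x)]$ before raising to the $2^r$-th power, while the paper applies the power directly inside the trace), together with a harmless typo in your coefficient $a_t^{2^{m}}$, which should read $a_t^{2^{m-t+r}}=a_t^{2^{r-t}}$.
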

\begin{proof} If $x$ is in the null space of $\sum_{t=0}^r P^t(a_t)$, then for all $y\in \mathbb{F}_{2^m}$
$$\Tr\left[ xya_0+\sum_{t=1}^r \left(xy^{2^t}+x^{2^t}y\right)a_t\right]=0.$$
Since $\Tr(x)=\Tr(x^2)=\cdots=\Tr\left(x^{\frac{1}{2}}\right)$ we have
$$\Tr\left[ \left(\left(xa_0\right)^{2^r}+\sum_{t=1}^r \left(xa_t\right)^{2^{t-r}}+a_t^{2^r}\,x^{2^{t+r}}\right)y^{2^r}\right]=0.$$
Non-degeneracy of the trace now implies
$$\left(xa_0\right)^{2^r}+\sum_{t=1}^r \left(xa_t\right)^{2^{t-r}}+a_t^{2^r}\,x^{2^{t+r}}=0.$$
This is a polynomial of degree at most $2^{2r}$ so there are at most $2^{2r}$ solutions. Hence the rank of the binary symmetric matrix $\sum_{t=0}^r P^t(a_t)$ is at least $m-2r$.
\end{proof}

\section{Delsarte-Goethals Sensing}
\label{sec:dg}
\subsection{Delsarte-Goethals Frames}
{
We start by picking an odd number $m$. The $2^m$ rows of the sensing matrix $\B$ are indexed by the binary $m$-tuples $x$, and the $2^{(r+2)m}$ columns are indexed by the pairs $P,b$, where $P$ is an $m\times m$ binary symmetric matrix in the Delsarte-Goethals set $DG(m,r)$, and $b$ is a binary $m$-tuple. The entry $\varphi_{P,b}(x)$ is given by 
\begin{equation}
\label{dg:eq1}
\varphi_{P,b}(x)=\nor\imath^{xPx^\top+2bx^\top}
\end{equation}
 Note that all arithmetic in the expressions $xPx^\top+2bx^\top$ takes place in the ring of integers modulo $4$. Given $P,b$ the vector $xPx^\top +2bx^\top$ is a codeword in the Delsarte-Goethals code (defined over the ring of integers modulo $4$). For a fixed matrix $P$, the $2^m$ columns $\varphi_{P,b}~,~b\in \mathbb{F}_2^m$ form an orthonormal basis. The name Delsarte-Goethals frame (DG frame) reflects the fact that $\A$ is a union of orthonormal bases. Hence, it is a tight-frame with redundancy $\frac{\n}{\mm}$. Delsarte-Goethals frames are highly incoherent (see \cite{strip}):

\begin{prop}
\label{dg:prop3}
Let $m$ and $r$ be non-negative integers where $m$ is odd and $r<\frac{m-1}{2}$. Then the worst case coherence $\mu$ of the sensing matrix derived from the $DG(m,r)$ set satisfies  $\mu\leq \frac{1}{\mm^{\frac{1}{2}-\frac{r}{m}}}$.
\end{prop}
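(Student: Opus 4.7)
My plan is to reduce the coherence bound to a classical Gauss-sum estimate for $\mathbb{Z}_4$-valued quadratic phases and then invoke the rank bound established in the previous theorem for nonzero elements of $DG(m,r)$.

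\emph{Step 1: reduce to a single exponential sum.} For distinct column indices $(P,b) \neq (P',b')$ I would first show that
\begin{equation*}
\varphi_{P,b}^\dag \,\varphi_{P',b'} \;=\; \frac{1}{\mm}\sum_{x \in \F} \imath^{\,xQx^\top + 2cx^\top},
\end{equation*}
where $(Q,c) \in DG(m,r) \times \F$ is determined by the pair $(P-P',\, b-b')$. The subtle point is that $\imath^{xPx^\top}\overline{\imath^{xP'x^\top}}$ is not literally a pure quadratic phase in $P-P'$, because the diagonal of a binary symmetric matrix contributes mod~$4$ while the off-diagonal part contributes mod~$2$. However, the $\mathbb{Z}_4$-linearity of the Delsarte-Goethals code, combined with the identity $x_i^2 = x_i$ for $x_i \in \{0,1\}$, lets us absorb the residual $\mathbb{Z}_2$-linear terms into the vector $c$ and close the expression up on $DG(m,r)$.

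\emph{Step 2: dispose of the degenerate case.} If $Q = 0$, then $c \neq 0$ (else the two columns coincide), and the sum collapses to $\sum_{x \in \F} (-1)^{c \cdot x} = 0$.

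\emph{Step 3: bound the nontrivial exponential sum.} When $Q \neq 0$, the rank theorem of Section~\ref{dg_set} gives $\rho := \mathrm{rank}(Q) \geq m-2r$. The bound I need is
\begin{equation*}
\Bigl|\sum_{x \in \F} \imath^{\,xQx^\top + 2cx^\top}\Bigr| \;\leq\; 2^{\,m - \rho/2}.
\end{equation*}
The standard route is to bring $Q$ into a canonical block form under an invertible binary change of variable, so that the sum factors over the non-kernel coordinates into elementary Gauss sums, each of modulus $\sqrt{2}$, while each kernel coordinate contributes either a clean factor of $2$ or makes the entire sum vanish, depending on whether the projection of $c$ onto the kernel of $Q$ is trivial or not. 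Combining the two regimes yields the displayed bound, and substituting $\rho \geq m - 2r$ gives $2^{(m+2r)/2}$.

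\emph{Step 4: conclude.} Dividing by $\mm = 2^m$ gives $|\varphi_{P,b}^\dag \varphi_{P',b'}| \leq 2^{\,r - m/2} = \mm^{-(1/2 - r/m)}$, which is the claimed coherence bound on $\mu$. The principal obstacle is Step~3: one has to handle the interaction between the $\mathbb{Z}_4$-quadratic term $xQx^\top$ and the $\mathbb{Z}_2$-linear term $2cx^\top$, especially when $c$ meets $\ker Q$. The cleanest approach is to import the canonical-form analysis of $\mathbb{Z}_4$-linear Kerdock/Delsarte-Goethals quadratic forms developed in \cite{strip}; doing it ab initio requires a careful diagonalization of a $\mathbb{Z}_4$-valued form whose underlying symmetric matrix is only defined mod $2$.
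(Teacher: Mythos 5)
Your proposal is correct and follows essentially the argument the paper relies on (the paper itself defers this proof to the cited reference \cite{strip}): reduce the inner product to a single $\mathbb{Z}_4$-quadratic exponential sum with $Q = P\oplus P'\in DG(m,r)$, absorb the diagonal discrepancy into the linear term, and combine the Gauss-sum bound $|S|\le 2^{m-\rho/2}$ with the rank bound $\rho\ge m-2r$ to get $\mu\le 2^{r-m/2}=\mm^{-(1/2-r/m)}$. For Step~3 you can avoid the canonical-form diagonalization entirely by the squaring identity $|S|^2=2^m\sum_{z\in\ker Q}\imath^{zQz^\top+2cz^\top}\le 2^{m}\,|\ker Q|\le 2^{m+2r}$, which is exactly the computation the paper carries out in its appendix for the Kerdock case.
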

Sensing matrices derived from Delsarte-Goethals sets are incoherent tight frames so the results of Section~\secref{sec:bg} can be brought to bear. The $\mm\times\mm^2$ sensing matrix derived from the Kerdock set is the union of $\mm$ mutually unbiased bases and the worst case coherence matches the lower bound derived by Levenshtein \cite{L} (see also Strohmer and Heath \cite{stromher}).
}

\subsection{Delsarte-Goethals Sieves}
Chirp Detection \cite{strip} and Witness Averaging \cite{quad} are fast reconstruction algorithms that exploit the structure of Delsarte-Goethals frames. By sieving the testimony of witnesses \cite{quad} it is possible to detect the presence or absence of a signal at any given position in the data domain without explicitly reconstructing the entire signal.

There is however an aliasing problem with DG frames. When two signals modulate columns in the same orthonormal basis, spurious tones are generated by both the chirp detection and witness interrogation algorithms. This can be resolved by decimating the DG frame so that no two columns share the same binary symmetric matrix $P$. The simplest way to do this is to retain columns 
 \begin{equation}
\label{subsampled}
\varphi_{P}(x)=\nor\imath^{xPx^\top}. \end{equation}
for which $b=0.$ We call these subsampled matrices Delsarte-Goethals sieves ($DG(m,r)$ sieves) since it is still possible to sieve the testimony of witnesses. Note that each column of a DG sieve, is a column of the corresponding DG sieve, and the worst case coherence bound follows from Proposition~\ref{dg:prop3}. Figure~\ref{dg:fig1} shows the distribution of the absolute value of pairwise inner products between columns of the $DG(5,1)$ sieve. All entries on the main diagonal are equal to $1$, and around the the diagonal there are squares corresponding to translates of the Kerdock set $K_m$.

Table~\ref{norms:tab1} shows that subsampling may increase the spectral norm. This will make it more difficult to reconstruct the signal either by chirp detection or by sieving the testimony of witnesses. We need to understand this increase in order to be able to apply the results of Section~\secref{sec:bg}.

\begin{figure*}[ht]
\centerline{
\subfloat[Inner product between the first $512$ columns of the $DG(5,1)$ matrix]{\includegraphics[width=2.5in]{./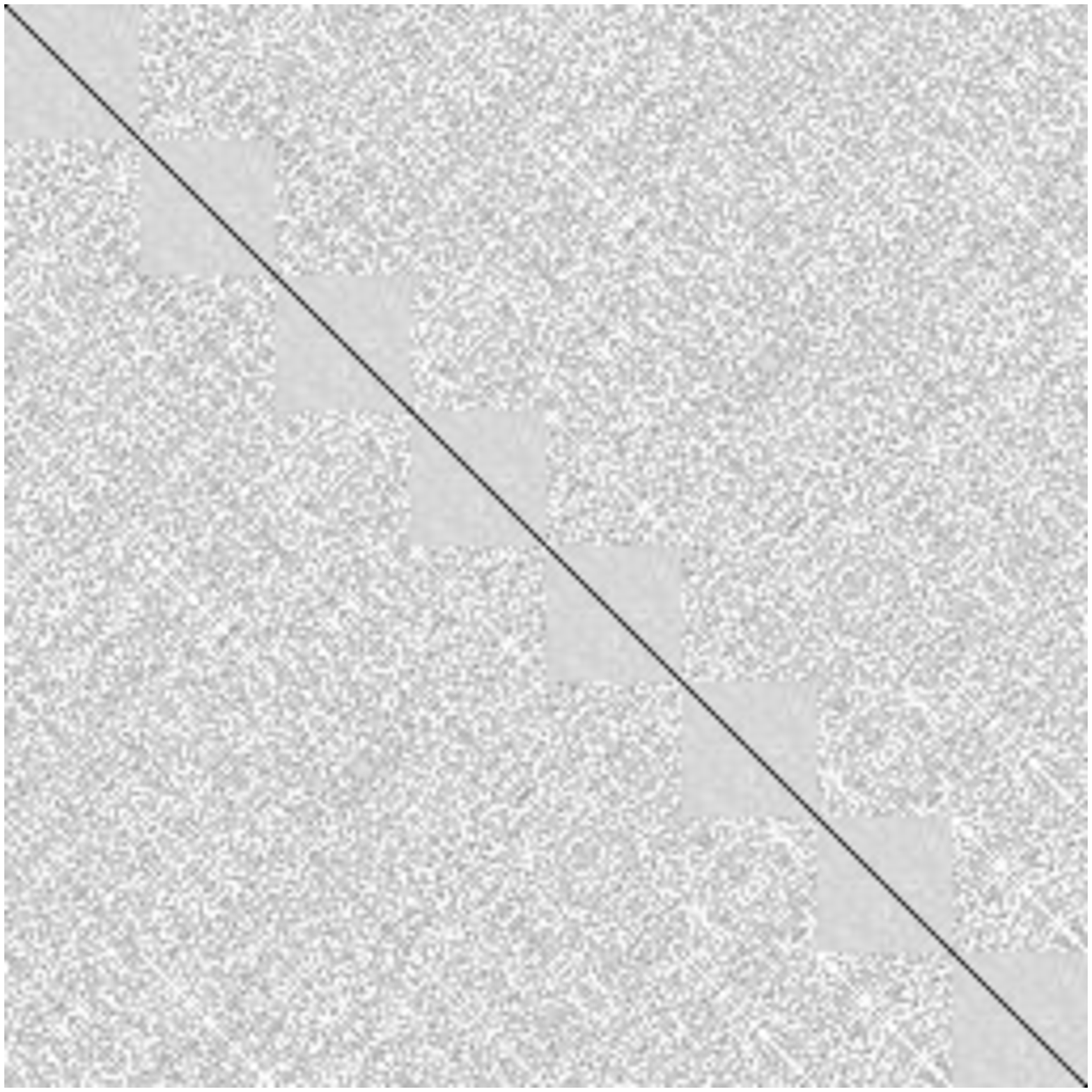}
\label{dg:fig1_1}
}
\hfil
\subfloat[Inner product between the first $256$ columns of the $DG(5,1)$ matrix]{\includegraphics[ width=2.5in]{./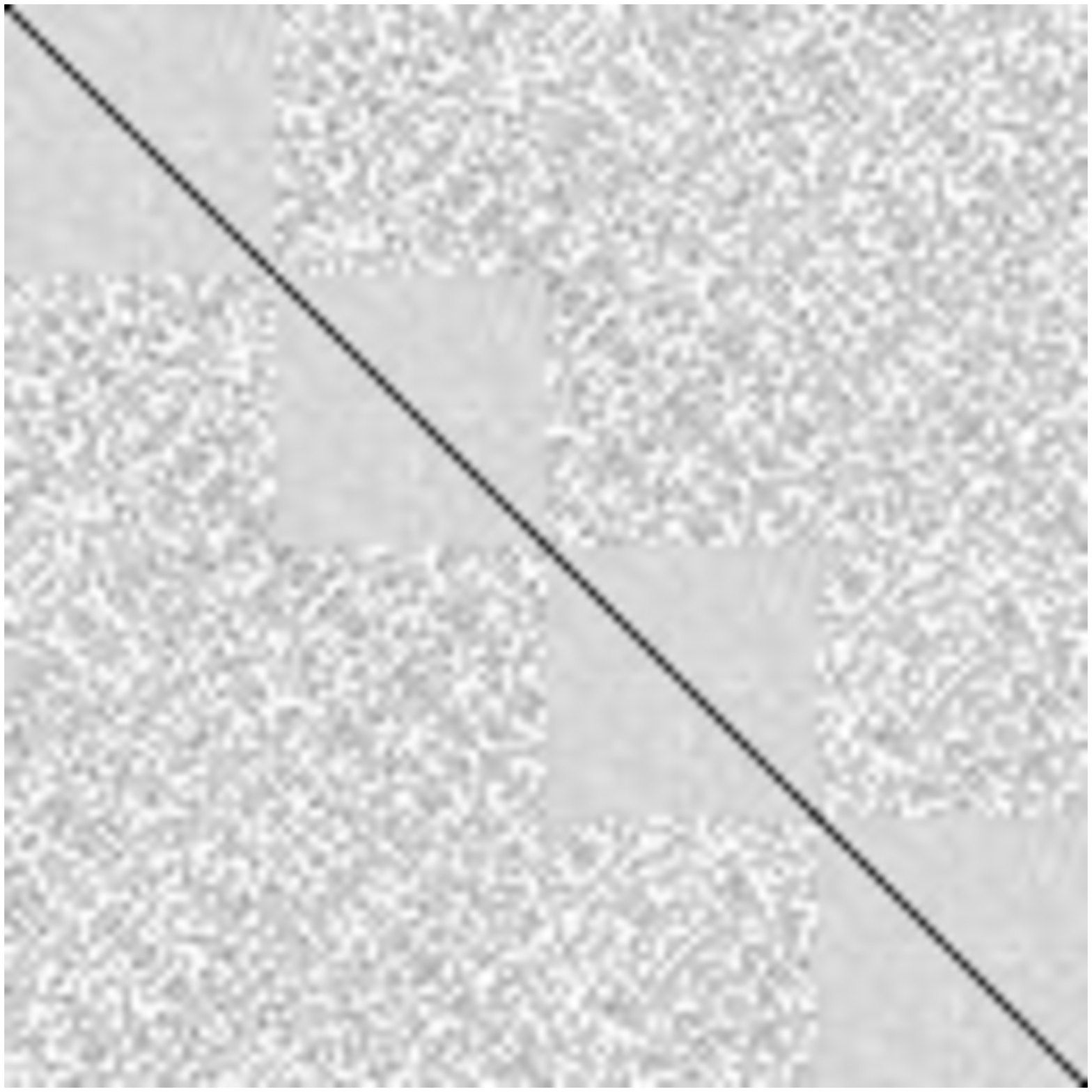}
\label{dg:fig1_2}
}
}
\caption{The inner product between the columns of a $DG(5,1)$ matrix. The point at position $(i,j)$ shows the inner product between the columns $\varphi_i$ and $\varphi_j$. Lighter color shows higher inner product value.}
\label{dg:fig1}
\end{figure*}

\subsection{Spectral Norm of DG Matrices }
\label{dg_norm}
\cb{
\begin{table*}[!t]
\renewcommand{\arraystretch}{1.3}
\caption{Spectral norms of $DG(m,1)$ frames and $DG(m,1)$ sieves as a function of m}
\label{norms:tab1}
\centering
\begin{tabular}{|c|c|c|c|c|}
\hline
\bfseries $DG(m,1)$&$m=3$ & \bfseries $m=5$ & \bfseries $m=7$ & \bfseries$m=9$  \bfseries\\
\hline\hline
 Frame & $2.8284$ &  $5.6569$ & $11.3137$ &  $22.6274$
\\
\hline
Sieve & $5.6568$   &$11.1295$    &$25.0386$ &$55.0338$
\\
\hline
\end{tabular}
\end{table*}
}

Given a sensing matrix, the results presented in Section~\secref{sec:bg} show that if the the worst case coherence and spectral norm are sufficiently small then $\ell_0$ minimization has a unique solution which coincides with the solution of a convex LASSO program. The worst case coherence $\mu$ of the initial $DG(m,r)$ frame satisfies 
$\mu \leq \mm^{\frac{r}{m}-\frac{1}{2}}$. To make sure that every row sum vanishes, we further exclude the $m+1$ rows, indexed by powers of $2$, from the DG sieve. This exclusion changes the worst case coherence by at most $\frac{m+1}{\mm}$ $\left(\mbox{Now }\mu \leq \mm^{\frac{r}{m}-\frac{1}{2}}+\frac{m+1}{\mm}\right)$. The experimental results presented below suggest that the number of pairs of rows in a DG sieve that fail to be orthogonal is very small. Removing these rows results in an equiangular tight frame that is not a union of orthonormal bases.

Table~\ref{norms:tab1} lists the spectral norm of $DG(m,r)$ frames and $DG(m,r)$ sieves for $m= 3,5,7$ and $9$. The spectral norm of a sieve is almost twice that of the corresponding frame and we shall see that the reason is a small number of duplicate rows. Removing these rows results in an equiangular tight frame. We now describe how to find these duplicate rows. 

{Let $x,y$ be two distinct elements of the finite field $\F$, and let $\varphi(x)$, $\varphi(y)$ denote the two rows in $\A$ indexed by $x$ and $y$. Setting $y=x+e$ we obtain
\begin{eqnarray}
\label{th:eq1}
\varphi(x)^\dag\,\varphi(y)&=&\frac{1}{\mm} \sum_{P\in DG(m,r)} \imath^{(x+e)P(x+e)^\top - xPx^\top}
%\\\nonumber&=&
= \frac{1}{\mm} \sum_{P\in DG(m,r)} \imath^{2ePx^\top + ePe^\top}
\\\nonumber &=&  \frac{1}{\mm}  \prod_{t=0}^r \left(\sum_{a\in\F} \imath^{2eP^t(a)xT\top + eP^t(a)eT\top}\right).
\end{eqnarray}
If rows $\varphi(x)$ and $\varphi(y)$ are not orthogonal then each term in the product is nonzero. When $t>0$ we now show that the $t^{th}$ term in the product is a sum of linear characters. Since the index of summation ranges over the group, the sum is either zero or the linear character is trivial (each term in the sum is equal to 1).

\begin{lemma}
\label{th:lemma1}
Let $t\geq 1$ and let $x$ and $x+e$ be two distinct elements of $\F$. Then either $\sum_{a\in\F} \imath^{e\,P^t(a)(2x+e)^\top}$ is zero, or for every field element $a$: $(x+e)P^t(a)(x+e)^\top-xP^t(a)x^\top=0\,\,(\mod 4)$.
\end{lemma}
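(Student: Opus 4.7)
The plan is to show that the map $\chi(a) := \imath^{e P^t(a)(2x+e)^\top}$, viewed as a function of $a \in \F$, is a group homomorphism from $(\F,+)$ into $\{\pm 1\}$. Once this is established the dichotomy is standard: either $\chi$ is the trivial character---in which case every exponent is $0 \pmod{4}$, giving the second alternative---or $\chi$ is nontrivial, so that $\sum_a \chi(a) = 0$ by orthogonality of characters, giving the first alternative.

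First I verify that for $t \geq 1$ the binary symmetric matrix $P^t(a)$ has zero main diagonal. Indeed, from the defining identity \eqref{field_eq} with $x = y = v$,
$$v P^t(a) v^\top = \Tr\bigl[(v \cdot v^{2^t} + v^{2^t} \cdot v)\, a\bigr] = \Tr\bigl[2\, v^{1+2^t} a\bigr] = 0 \text{ in } \mathbb{F}_2,$$
so plugging in standard basis vectors forces $P^t(a)_{ii} = 0$. Expanding the exponent over the integers,
$$e P^t(a)(2x+e)^\top = 2\, e P^t(a) x^\top + e P^t(a) e^\top,$$
both summands are even (the second because the zero diagonal yields $e P^t(a) e^\top = 2\sum_{i<j} e_i e_j P^t(a)_{ij}$). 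Hence the exponent lies in $2\mathbb{Z}/4\mathbb{Z}$ and $\chi(a) \in \{\pm 1\}$.

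The main step is the $\mathbb{F}_2$-linearity of $F(a) := e P^t(a)(2x+e)^\top \pmod{4}$ in $a$. Because $P^t$ is $\mathbb{F}_2$-linear in $a$, over the integers I have
$$P^t(a) + P^t(b) = P^t(a+b) + 2\bigl(P^t(a)\odot P^t(b)\bigr),$$
where $\odot$ denotes entrywise product. Substituting into the definition of $F$, the correction to additivity is $-2\, e\,\bigl(P^t(a)\odot P^t(b)\bigr)(2x+e)^\top$; the $2x$-part carries a factor of $4$ and dies modulo $4$, while the $e$-part equals $-2\, e\,\bigl(P^t(a)\odot P^t(b)\bigr)\, e^\top$ and also vanishes modulo $4$, because $P^t(a)\odot P^t(b)$ still has zero diagonal (so this expression itself is again divisible by $2$). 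Therefore $F(a+b) \equiv F(a) + F(b) \pmod{4}$ and $\chi = (-1)^{F/2}$ is an additive character of $(\F,+)$, completing the reduction.

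The main obstacle is precisely this mod-$4$ bookkeeping: one must reconcile the $\mathbb{F}_2$-linear parametrization $a \mapsto P^t(a)$ with a $\mathbb{Z}/4$-valued exponent, and the additivity of $F$ rests entirely on the vanishing-diagonal property of $P^t(a)$ for $t \geq 1$---which is exactly why the hypothesis $t \geq 1$ enters the statement (for $t = 0$ the Kerdock matrices $P^0(a)$ can have nonzero diagonal entries, and the character argument breaks down).
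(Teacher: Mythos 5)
Your proof is correct and follows essentially the same route as the paper: both arguments rest on the observation that for $t\geq 1$ the matrix $P^t(a)$ has zero diagonal, so that $a\mapsto e\,P^t(a)(2x+e)^\top \bmod 4$ is an additive homomorphism from $(\F,+)$ into $2\,\mathbb{Z}_4$, whence the sum is a character sum that either vanishes or has all terms equal to $1$. Your write-up merely supplies the mod-$4$ bookkeeping (via $P^t(a)+P^t(b)=P^t(a+b)+2\,P^t(a)\odot P^t(b)$) that the paper leaves implicit.
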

\begin{IEEEproof}
When $t>0$ every matrix $P^t(a)$ has zero diagonal and the map $a\rightarrow (e+2x) P^t(a)e^\top$ is a linear map from the additive group $\mathbb{F}_2^m$ to $2\,\mathbb{Z}_4$. If this map is not identically zero then the character sum vanishes.
\end{IEEEproof}
The next proposition follows from non-degeneracy of the trace.
\begin{prop}
\label{th:prop1}
If $t>0$ then for every field element $f$
\begin{equation}
\label{th:eq2} f\,P^t(a)f^\top=2\Tr\left(f^{2^t+1}\,a\right)+2z_a f^\top\,(\mbox{ mod }4)\quad \mbox{where } z_a=\left[\Tr\left(\xi^{j(2^t+1)}\,a \right)\,\,j=0,\cdots,m-1 \right].
\end{equation}
\end{prop}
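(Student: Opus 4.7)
My plan is to compare the two $\mathbb{Z}_4$-valued quadratic forms $Q(f) := fP^t(a)f^\top$ and $R(f) := 2\Tr(f^{2^t+1}a) + 2z_a f^\top$ on $\mathbb{F}_2^m$ by showing that they share the same symmetric bilinear polarization and agree on every standard basis vector.

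For the polarizations, the defining identity \eqref{field_eq} together with the symmetry of $P^t(a)$ gives
\[
Q(x+y) - Q(x) - Q(y) \;\equiv\; 2\,\Tr\!\big[(xy^{2^t}+x^{2^t}y)\,a\big] \pmod 4,
\]
the factor $2$ arising because the cross terms $xP^t(a)y^\top$ and $yP^t(a)x^\top$ coincide. For $R$, I would expand $(x+y)^{2^t+1} = x^{2^t+1} + y^{2^t+1} + (xy^{2^t}+x^{2^t}y)$ in characteristic $2$ and use the $\mathbb{F}_2$-linearity of $f \mapsto z_a f^\top$ to obtain the same polarization.

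Next, I would check agreement at the basis vectors $e_j$ corresponding to $\xi^j$. For $t > 0$ the diagonal of $P^t(a)$ vanishes, since by \eqref{field_eq} one has $p^t_{jj}(a) \equiv \Tr[(\xi^{j+j2^t} + \xi^{j2^t+j})a] = \Tr[2\xi^{j(2^t+1)}a] \equiv 0 \pmod 2$, so $Q(e_j) = 0$. On the other side, $R(e_j) = 2\Tr(\xi^{j(2^t+1)}a) + 2\Tr(\xi^{j(2^t+1)}a) = 4\Tr(\xi^{j(2^t+1)}a) \equiv 0 \pmod 4$, matching $Q$.

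To finish, set $D := Q - R$. Equal polarizations give $D(x+y) \equiv D(x) + D(y) \pmod 4$; setting $y = x$ forces $2D(x) \equiv 0 \pmod 4$, so $D$ takes values in $\{0,2\}$ and descends to an $\mathbb{F}_2$-linear map $\mathbb{F}_2^m \to \mathbb{F}_2$. Since it vanishes on the basis, it is identically zero, and the identity \eqref{th:eq2} follows. The main obstacle I anticipate is careful bookkeeping between the $\mathbb{F}_2$-valued entries of $P^t(a)$ and the $\mathbb{Z}_4$-valued quadratic form: in particular, justifying that the polarization of $Q$ lifts to $2\Tr[(xy^{2^t}+x^{2^t}y)a] \pmod 4$ relies on the arithmetic identity $2(s \oplus r) \equiv 2s + 2r \pmod 4$ for $s,r \in \{0,1\}$. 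Non-degeneracy of the trace enters exactly as the authors suggest: it guarantees that the $\mathbb{F}_2$-linear correction, determined by its values on the basis $\{\xi^j\}$, is represented uniquely by $z_a$.
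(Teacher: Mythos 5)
Your proof is correct and follows essentially the same route as the paper: show the two $\mathbb{Z}_4$-valued quadratic forms have the same bilinear polarization, so they differ by a ($2\times$) linear function, and then pin down that linear function by evaluating at the standard basis vectors, where $fP^t(a)f^\top$ vanishes because $P^t(a)$ has zero diagonal for $t>0$. The paper states these two facts in two sentences; you have simply supplied the bookkeeping (the $2(s\oplus r)\equiv 2s+2r \pmod 4$ lift and the expansion of $(x+y)^{2^t+1}$) that makes them rigorous.
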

\begin{IEEEproof}
Since the quadratic forms $f P^t(a) f^\top$ and $2\Tr\left( a f^{2^t+1}\right)$ determine the same bilinear form they differ by a linear function 
$2z_a\, f^\top$. Since the quadratic form $f P^t(a) f^\top$ vanishes at all standard coordinate vectors we are able to determine the entries of the vector $2z_a$ that describes the linear function.
\end{IEEEproof} 
Next we use non-degeneracy of the trace to find duplicate rows $\varphi(x)$ and $\varphi(x+e)$.
\begin{lemma}
\label{th:lemma2}
The existence of field elements $x,e$ such that
\begin{equation}
\label{th:eq3}
 (x+e)P^t(a)(x+e)^\top-xP^t(a)x^\top =0\,(\mbox{mod }4)\quad\mbox{for all }a \mbox{ in }\mathbb{F}_2^m, 
\end{equation}
is equivalent to the existence of a solution $\frac{x}{e}$ to the equation
\begin{equation}
\label{th:eq4}
1+\frac{x}{e}+\left(\frac{x}{e}\right)^{2^t}+\sum_{j=0}^{m-1} e_j \left(\frac{\xi^j}{e} \right)^{2^t+1}=0.
\end{equation}
\end{lemma}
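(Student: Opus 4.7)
The plan is to reduce the condition (\ref{th:eq3}), which quantifies over all $a \in \mathbb{F}_2^m$, to a single polynomial equation in $\mathbb{F}_{2^m}$ by first expressing everything as a trace against $a$, and then invoking non-degeneracy. First I would expand, using the symmetry of $P^t(a)$, the increment
\[
(x+e)P^t(a)(x+e)^\top - xP^t(a)x^\top = 2\,xP^t(a)e^\top + eP^t(a)e^\top \pmod 4.
\]
Note that $xP^t(a)e^\top$ is an entry in $\{0,1\}$ determined by (\ref{field_eq}), namely $xP^t(a)e^\top = \Tr\!\left[(xe^{2^t}+x^{2^t}e)\,a\right] \pmod 2$, so the doubled term can be rewritten modulo $4$ as $2\,\Tr\!\left[(xe^{2^t}+x^{2^t}e)\,a\right]$.

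Next I would plug in Proposition~\ref{th:prop1} to write $eP^t(a)e^\top = 2\,\Tr(e^{2^t+1}a) + 2\,z_a e^\top \pmod 4$, and then observe that $z_a e^\top = \sum_{j=0}^{m-1} e_j\,\Tr\!\left(\xi^{j(2^t+1)} a\right) = \Tr\!\left(a\sum_{j=0}^{m-1} e_j \xi^{j(2^t+1)}\right)$ by linearity of the trace. Combining these identities, the condition (\ref{th:eq3}) becomes
\[
2\,\Tr\!\left[ a\left( xe^{2^t} + x^{2^t}e + e^{2^t+1} + \sum_{j=0}^{m-1} e_j\,\xi^{j(2^t+1)} \right) \right] \equiv 0 \pmod 4 \quad \text{for all } a \in \mathbb{F}_2^m,
\]
which is the same as the inner trace being zero mod $2$ for every $a$.

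Finally I would invoke non-degeneracy of the trace form: the trace vanishes against every $a$ iff the argument itself is zero in $\mathbb{F}_{2^m}$. This yields
\[
xe^{2^t} + x^{2^t}e + e^{2^t+1} + \sum_{j=0}^{m-1} e_j\,\xi^{j(2^t+1)} = 0.
\]
Dividing through by $e^{2^t+1}$ (which is nonzero because $x \ne y$ forces $e \ne 0$) produces exactly (\ref{th:eq4}), where the three leading terms reduce to $\frac{x}{e}$, $\bigl(\frac{x}{e}\bigr)^{2^t}$, and $1$ respectively. The reverse direction is immediate by running the same equivalences backwards. The main obstacle I anticipate is the careful mod-$4$ bookkeeping in the first step, since the symmetric cross term $xP^t(a)e^\top$ must be treated as an integer in $\{0,1\}$ rather than an element of $\mathbb{F}_2$ before doubling; after that, everything is essentially trace manipulation and division.
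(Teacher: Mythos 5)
Your proof is correct and follows essentially the same route as the paper's: both reduce~\eqref{th:eq3} to the vanishing of $\Tr[a(\,\cdot\,)]$ for every $a$, invoke non-degeneracy of the trace to get a single field equation, and divide by $e^{2^t+1}$. The only cosmetic difference is that you split the increment into the cross term $2\,xP^t(a)e^\top$ plus $eP^t(a)e^\top$ and apply Proposition~\ref{th:prop1} only to the latter, whereas the paper applies that proposition to both $x+e$ and $x$ and then expands $(x+e)^{2^t+1}$; the two computations coincide.
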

\begin{IEEEproof}
Since the trace is a linear map we may replace~\eqref{th:eq3} by the condition that for all $a$ in $\mathbb{F}_2^m$
$$\Tr\left[a\left((x+e)^{2^t+1}+x^{2^t+1}+  \sum_{j=0}^{m-1}e_j \xi^{j\left(2^t+1\right)} \right) \right]=0.$$
Now the non-degeneracy of the trace implies that $(x+e)^{2^t+1}+x^{2^t+1}+  \sum_{j=0}^{m-1}e_j \xi^{j\left(2^t+1\right)}=0$. Expanding $(x+e)^{2^t+1}$, we ontain 
$$e^{2^t+1}+x\,e^{2^t}+x^{2^t}\,e+ \sum_{j=0}^{m-1}e_j \xi^{j\left(2^t+1\right)}=0.$$ Since $e$ is non-zero, dividing the equation by $e^{2^t+1}$ completes the proof.
\end{IEEEproof}
The solutions to the equation $z + z^{2^t} =0$ form a subfield of $\mathbb{F}_2^m$ and the number of solutions is $\gcd \left(2^t-1,2^m-1\right)$. Note that when m is odd and $t=1$ or $t=2$, there are exactly two solutions ($z=0$ and $z=1$).
We now list the conditions satisfied by $x$ and $e$ if the row $\varphi(x)$ is not orthogonal to the row $\varphi(x+e)$.
\begin{theorem}
\label{th:theorem1}
Let $x$ and $x+e$ be two distinct elements of the finite field $\F$. Then $\varphi(x)^\dag \varphi(x+e)\neq 0$ if and only if the following conditions simultaneously hold:
\begin{itemize}
\item(C1) For every $t\geq 1$: $\frac{x}{e}+\left(\frac{x}{e}\right)^{2^t}=1+\sum_{j=0}^{m-1} e_j \left(\frac{\xi^j}{e} \right)^{2^t+1}.$
\item(C2) $\sum_{a\in\F} \imath^{e\,P^0(a)(2x+e)^\top}\neq 0$.
\end{itemize}
\end{theorem}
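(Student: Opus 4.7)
The plan is to read off both directions of the equivalence directly from the factorization identity~\eqref{th:eq1} together with the two technical lemmas and the proposition just proved. Since equation~\eqref{th:eq1} writes $\varphi(x)^\dag\varphi(x+e)$ as a product of character sums indexed by $t=0,1,\dots,r$, the inner product is nonzero if and only if \emph{every} factor in the product is nonzero. So the strategy is to characterize non-vanishing of the $t$-th factor separately for $t=0$ and for $t\geq 1$, and then collect conditions.

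First I would handle the $t=0$ factor. Rewriting the exponent $2eP^0(a)x^\top+eP^0(a)e^\top=eP^0(a)(2x+e)^\top$ shows that the $t=0$ factor is precisely the character sum in (C2); nothing further needs to be said for that term, and (C2) is necessary and sufficient for that factor to be nonzero. No analogue of Lemma~\ref{th:lemma1} is available here because $P^0(a)$ has nontrivial diagonal, so the map $a\mapsto eP^0(a)(2x+e)^\top$ need not be $\mathbb{Z}_4$-linear.

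Next I would treat $t\geq 1$. Lemma~\ref{th:lemma1} gives a dichotomy: the $t$-th character sum either vanishes, or else satisfies
\[
(x+e)P^t(a)(x+e)^\top-xP^t(a)x^\top\equiv 0\pmod 4\quad\text{for every }a\in\mathbb{F}_2^m,
\]
in which case the sum equals $2^m\neq 0$. Therefore the $t$-th factor is nonzero precisely when this congruence holds for all $a$. By Lemma~\ref{th:lemma2}, this uniform congruence is equivalent to the equation
\[
1+\tfrac{x}{e}+\bigl(\tfrac{x}{e}\bigr)^{2^t}+\sum_{j=0}^{m-1}e_j\bigl(\tfrac{\xi^j}{e}\bigr)^{2^t+1}=0,
\]
which, after rearrangement, is exactly condition (C1) at that value of $t$. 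Applying this for each $t\in\{1,\dots,r\}$ collects all the conditions in (C1).

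Putting the two cases together: $\varphi(x)^\dag\varphi(x+e)\neq 0$ if and only if all $r+1$ factors in~\eqref{th:eq1} are nonzero, which by the above happens exactly when (C1) holds for every $t\geq 1$ and (C2) holds. The main subtlety is the dichotomy argument for $t\geq 1$; one must check carefully that the exponent $2eP^t(a)x^\top+eP^t(a)e^\top$ can be rewritten as $eP^t(a)(2x+e)^\top$ (so that Lemma~\ref{th:lemma1} applies verbatim) and that the ``$\mathbb{Z}_4$-linear in $a$'' structure used there really fails for $t=0$, so that (C2) cannot be replaced by a cleaner algebraic identity of the same kind as (C1). Beyond this bookkeeping, the proof is a direct assembly of~\eqref{th:eq1}, Lemma~\ref{th:lemma1}, and Lemma~\ref{th:lemma2}.
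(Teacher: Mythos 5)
Your proposal is correct and is exactly the argument the paper intends: the theorem is stated without a separate proof precisely because it is the direct assembly of the factorization~\eqref{th:eq1}, the dichotomy of Lemma~\ref{th:lemma1} for $t\geq 1$ (trivial character $\Rightarrow$ sum equals $2^m$, otherwise zero), and the translation of the resulting congruence into~\eqref{th:eq4} via Lemma~\ref{th:lemma2}, with the $t=0$ factor left as the untranslated character-sum condition (C2). Your observation that the zero-diagonal property fails for $P^0(a)$, so no (C1)-type reduction is available at $t=0$, matches the paper's treatment.
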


Theorem~\ref{th:theorem1} provides an efficient way for identifying the non-orthogonal rows of the sieve matrices  without requiring to calculate the gram matrices $\A^\dag\A$ explicitly. For every element $e$, we first find the solution for the case $t=1$. If such a solution exists then we just need to \textit{check} that condition (C1) is valid for other values of $t$. If all conditions passed then we just verify condition (C2). This method significantly reduces the computational cost of eliminating the non-orthogonal rows.

The next formula is for $t=1$
$$\frac{x}{e}+\left(\frac{x}{e}\right)^2=\lambda\quad\mbox{where }\lambda=1+\frac{\sum_{j=0}^{m-1}e_j \xi^{3j}}{e^3}.$$
This is a quadratic equation with roots $\frac{x}{e}$ and $\frac{x}{e} +1$ where $\frac{x}{e}\doteq\sum_{\substack{\ell:\,odd\\1\leq\ell\leq m-2 }} \lambda^{2^\ell}.$ On the other hand $$\lambda+\lambda^2= \frac{x}{e}+\left(\frac{x}{e}\right)^4=\alpha\quad\mbox{where }\alpha=1+\frac{\sum_{j=0}^{m-1}e_j \xi^{5j}}{e^5}.$$ Thus we can also retrieve the explicit solution
$\lambda=\sum_{\substack{\ell:\,odd\\1\leq\ell\leq m-2 }} \alpha^{2^\ell}.$ In other words, the following equivalence between the two field elements (which are both functions of $e$) must be satisfied:
\begin{equation}
\label{th:eq5}
\sum_{\substack{\ell:\,odd\\1\leq\ell\leq m-2 }} \left(1+\frac{\sum_{j=0}^{m-1}e_j \xi^{5j}}{e^5} \right)^{2^\ell}= 1+\frac{\sum_{j=0}^{m-1}e_j \xi^{3j}}{e^3}.
\end{equation}

\cb{
\begin{remark}
Solutions to condition~(C1) correspond to codewords of weight $2$ in the binary code that is dual to the code determined by matrices in $DG(m,r)$ with zero diagonal. The number of solutions can be calculated using the MacWilliams Identities and we provide details in Appendix~\secref{sec:mw}.
\end{remark}
}   
\cb{

 Table~\ref{norms:tab2} records the number of duplicate measurements that need to be deleted in order to transform a $DG(m,1)$ sieve into a tight frame. We calculated the number of duplicate rows for $DG(m,2)$, where $m \leq 15$, and found that there were no solutions to~\eqref{th:eq5} that also satisfied~(C2); that is all $DG(m,2)$ sieves with $m \leq 15$ are tight frames. Hence
 \begin{center}\textbf{Conjecture:} Every $DG(m,r)$ sieve with $r\geq 2$ is a tight-frame.\end{center}
 }

}

Figure~\ref{norms:fig1} displays for $m=7$ and $9$ the average condition number of a random $\mm\times k$ submatrix of the $DG(m,1)$ sieve and the $DG(m,0)$ frame. The spectral norm of the hollow gram matrix $\|\A^\dag\A-{\rm I}_{\mm}\|_2$ was calculated for $2000$ randomly chosen submatrices $\A_k$ and the average was recorded. The comparison with Gaussian sensing matrices was made by drawing $10$ iid Gaussian matrices, calculating for each matrix the average spectral norm over randomly chosen submatrices, and then recording the median value.

\cb{
\begin{table*}[ht]
\renewcommand{\arraystretch}{1.3}
\caption{Number of row deletions required to transform a $DG(m,1)$ sieve into a tight frame.}
\label{norms:tab2}
\centering
\begin{tabular}{|c|c|c|c|c|c|c|}
\hline
\bfseries $DG(m,1)$&$m=5$ & \bfseries $m=7$ & \bfseries $m=9$ & \bfseries$m=11$ & \bfseries$m=13$ & \bfseries$m=15$  \bfseries\\
\hline\hline
 \small{$\#$ of non-orthogonal rows} & $11$ &  $25$ & $45$ &  $83$ &  $203$ &  $381$ 
\\
\hline
 \small{$\%$ of non-orthogonal rows}  & $0.3438$ &  $0.1953$ &  $0.0879$ & $0.0405$ & $0.0248$ & $0.0116$
\\
\hline
\end{tabular}
\end{table*}
}

\begin{figure*}[ht]
\centerline{
\subfloat{\includegraphics[width=3.5in]{./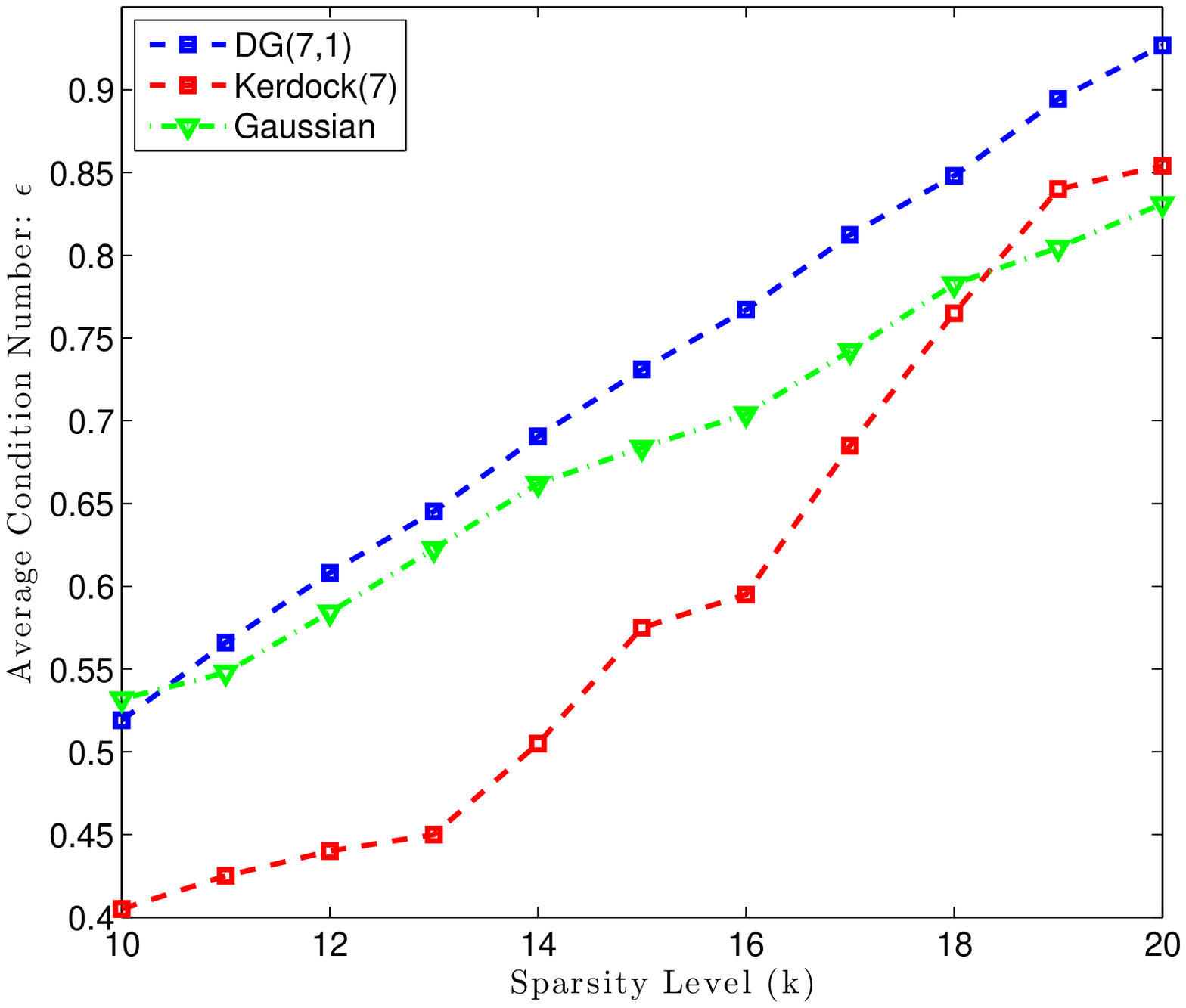}
\label{norms:fig1_1}
}
\hfil
\subfloat{\includegraphics[width=3.5in]{./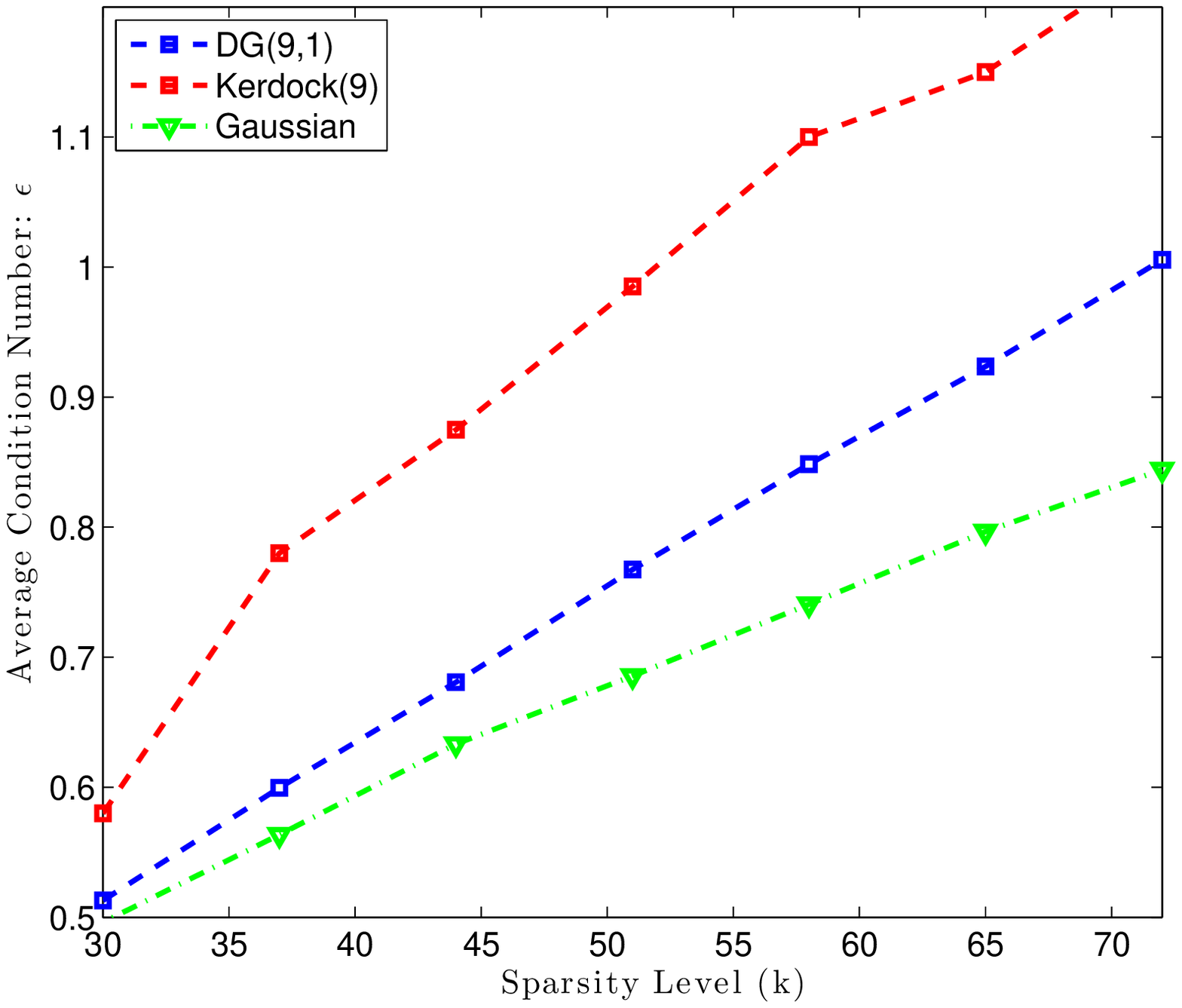}
\label{norms:fig1_2}
}
}
\caption{Average spectral norm of $\A_k^\dag\A_k-{\rm I}_{k\times k}$, where $\A_k$ is a random sub dictionary of $\A$. Here the comparison is between Gaussian, $DG(m,1)$ sieve, and $DG(m,0)$ base matrices. Each experiment is repeated $2000$ times.}
\label{norms:fig1}
\end{figure*}

\begin{remark}
{\rm
\cb{Here we compare the empirical results of Figure~\ref{norms:fig1} with the theoretical results of Theorem~\ref{bg:prop2}.  First we considered the $DG(7,0)$ frame, with $\n=2^{14}$ and $\mm=2^7$. The worst case coherence of $\A$ is $\mu = 2^{-\frac{7}{2}}$, and the square of the spectral norm of $\A$ is $2^7$. So the constant $c$ in Theorem~\ref{bg:prop3} needs to be at least $\mu\,\log \n=\frac{14\log 2}{8\sqrt{2}}\approx 0.85$. Hence, as long as $k$ is at most $\frac{0.85\,\times 128}{14 \log 2}\approx 11$, Theorem~\ref{bg:prop2} predicts probability of non-uniqueness on the order of $2^{-14}$. Experimental results presented in Figure~\ref{norms:fig1_1} are more positive; all $2000$ trials resulted in sub-dictionaries with full rank, even for k as large as $20$. 

Next we considered the $DG(7,1)$ sieve with $\n=2^{14}$ and $\mm=103$\footnote{The $25$ duplicate rows were removed from the matrix.}. The worst case coherence of $\A$ is $\mu \approx 2^{-\frac{5}{2}}$, and the square of the spectral norm of $\A$ is $\|\A\|^2\approx \frac{16384}{103}=159.6$. As a result, the constant $c$ needs to be at least $\frac{14\log 2}{4\sqrt{2}}\approx 1.70$. Therefore, as long as $k$ is less than $\frac{1.70\,\times 103}{14 \log 2}\approx 10$ Theorem~\ref{bg:prop2} predicts probability of non-uniqueness on the order of $2^{-14}$. Again, we see that the theoretical bound is not tight, and for $k$ as large as $20$ all trials provide uniqueness of sparse representation.
}
}
\end{remark}
\begin{figure*}[ht]
\centering
\fbox{ \includegraphics[width=3.5in]{./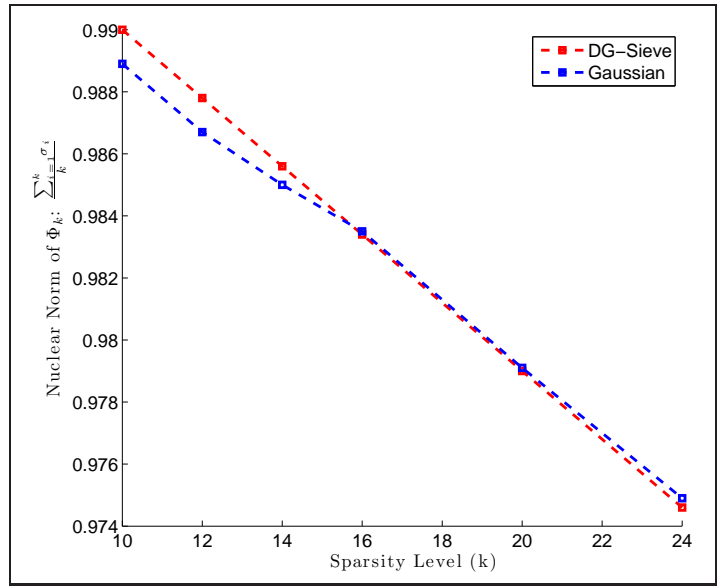}
}
\caption{Average nuclear norm $\left(\frac{1}{k}\sum_{i=1}^k \sigma_i\right)$ of random sub-dictionaries of of $DG(7,1)$ and Gaussian matrices of the same size as a function of the sparsity level $k$.}
\label{norms:fig2}
\end{figure*}

\begin{remark}
{\rm
The bounds of Proposition~\ref{bg:prop1} only apply to the condition number of random submatrices and do not provide additional information about the distribution of eigenvalues. However Gurevich and Hadani \cite{GH} have analyzed the spectrum of certain incoherent dictionaries that are unions of disjoint orthonormal bases. They have shown that the eigenvalues of the Gram matrix of a random subdictionary are asymptotically distributed around $1$ according to the Wigner semicircle law. Our experimental results suggest that this property is shared by DG sieves which are not unions of orthonormal bases. Figure~\ref{norms:fig2} shows that the distribution of the singular values of a random submatrix of a DG sieve is symmetric around 1, and very similar to the distribution for a Gaussian matrix of the same size.
}
\end{remark}

\section{Numerical Experiments}
\label{sec:exp}
\begin{figure*}[ht]
\centerline{
\subfloat[Average fraction of the support that is reconstructed successfully as a function of the sparsity level $k$]{\includegraphics[width=3.5in]{./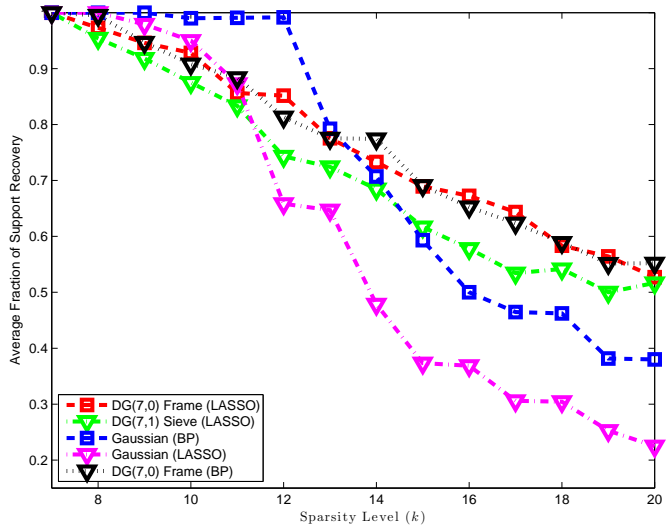}
\label{exp:fig1_1}
}
\hfil
\hskip0.2cm
\subfloat[Average reconstruction time in the noiseless regime for different  sensing matrices.]{\includegraphics[width=3.5in]{./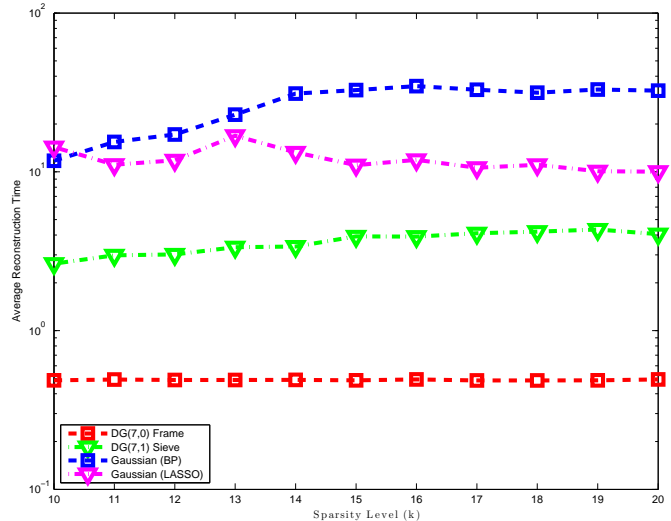}
\label{exp:fig1_2}
}
}
\caption{Comparison between $DG(7,0)$ frame, $DG(7,1)$ sieve, and Gaussian matrices of the same size in the noiseless regime. The regularization parameter for LASSO is set to $10^{-9}$.}
\label{exp:fig1}
\end{figure*}

%\begin{figure*}[ht]
%\centerline{
%\subfloat[Average fraction of the support that is reconstructed successfully as a function of the sparsity level $k$.] {\includegraphics[width=3.5in]{./}
%\label{exp:fig2_1}
%}
%\hfil
%\hskip0.2cm
%\subfloat[Average reconstruction time in the noiseless regime for different  sensing matrices.]{\includegraphics[width=3.5in]{./}
%\label{exp:fig2_2}
%}
%}
%\caption{Comparison between $DG(9,0)$ frame, $DG(9,1)$ sieve, and random matrices of the same size, in the noiseless regime. The regularization parameter for LASSO is again set to $10^{-9}$. Observe that for Gaussian matrices, interior point methods ($\ell_1$ - magic) are more accurate than gradient descent methods (LASSO) but are slower. }
%\label{exp:fig2}
%\end{figure*}

\begin{figure*}[ht]
\centerline{
\subfloat[The impact of the noise in the measurement domain on the accuracy of the sparse approximation for different sensing matrices.]{\includegraphics[width=3.5in]{./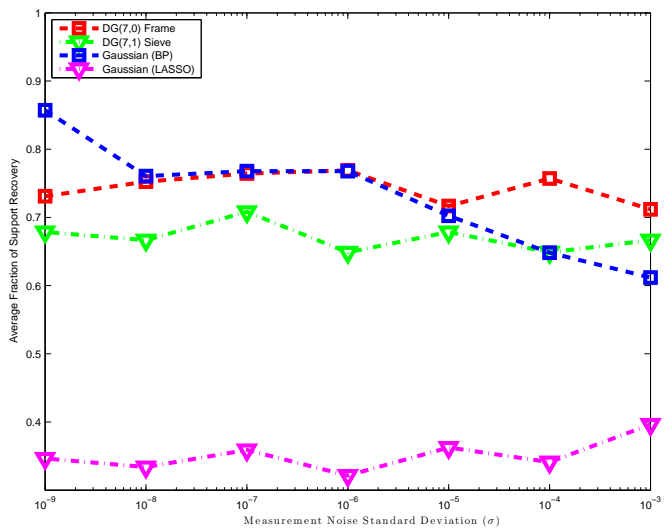}
\label{exp:fig3_1}
}
\hfil
\hskip0.2cm
\subfloat[The impact of the noise in the data domain on the accuracy of the sparse approximation for different sensing matrices. ]{\includegraphics[width=3.5in]{./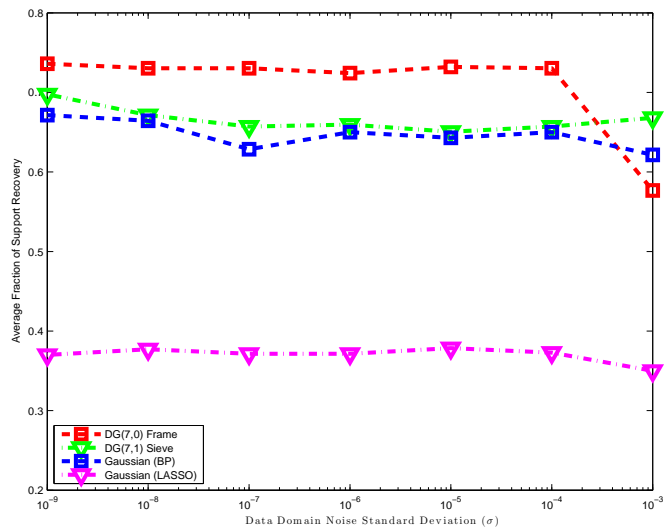}
\label{exp:fig3_2}
}
}
\caption{
Average fraction of the support that is reconstructed successfully as a function of the noise level in the measurement domain (left), and in the data domain (right). Here the sparsity level is $14$. The regularization parameter for LASSO is determined as a function of the noise variance according to Theorem~\ref{bg:prop3}.}
\label{exp:fig3}
\end{figure*}

In this Section we present numerical experiments to evaluate the performance of the DG frames and sieves. The performance of DG frames and sieves is compared with that of random Gaussian sensing matrices of the same size. The SpaRSA algorithm \cite{sparsa} with $\ell_1$ regularization parameter $\lambda = 10^{-9}$  is used for signal reconstruction in the noiseless case, and the parameter is adjusted according to Theorem~\ref{bg:prop3} in the noisy case. The reason for using SpaRSA is that is designed to solve complex valued LASSO programs.

\cb{
\begin{remark}
{\rm
Given a random sensing matrix satisfying  RIP, it is known that Basis Pursuit leads to more accurate reconstruction than the LASSO  \cite{CRT1}. It is for this reason that we also compare results for LASSO applied to DG matrices with results for Basis Pursuit applied to Gaussian matrices. The $\ell_1$-magic package\cite{magic} is used to solve the Basis Pursuit optimization program. The results for Gaussian matrices shown in Figure~\ref{exp:fig1} are consistent with the observation made in \cite{TW} that when the signal is not very sparse, interior point methods ($\ell_1$ - magic) are less sensitive than gradient descent methods (SpaRSA)
}
 \end{remark}
}

For Gaussian matrices, we sampled $10$ iid random matrices independently \cb{to eliminate the exponentially small chance of getting a sample $\A$ with $\mu=\omega\left(\mm\right)$ or $\|\A\|^2=\omega\left(\frac{\n}{\mm}\right)$}, and the median of the results among all $10$ random matrices is reported. \cb{The use of $10$ random trials to eliminate pathological sensing matrices is standard practice (see \cite{BGIST} for example).}  

\cb{
The experiments relate accuracy of sparse recovery to the sparsity level and the Signal to Noise Ratio (SNR). Accuracy is measured in terms of the statistical $0-1$ loss metric which captures the fraction of signal support that is successfully recovered. The reconstruction algorithm outputs a $k$-sparse approximation $\has$ to the $k$-sparse signal $\as$, and the statistical $0-1$ loss is the fraction of the support of $\as$ that is not recovered in $\has$. Each experiment was repeated $2000$ times and Figure~\ref{exp:fig1} records the average loss.
}

Figure~\ref{exp:fig1} plots statistical $0-1$ loss and complexity (average reconstruction time) as a function of the sparsity level $k$. We select $k$-sparse signals with uniformly random support, with random signs, and with the amplitude of non-zero entries set equal to $1$. Three different sensing matrices are compared; a Gaussian matrix, a $DG(7,0)$ frame and a $DG(7,1)$ sieve. After compressive sampling the signal support is recovered using the SpaRSA algorithm with $\lambda = 10^{-9}$. For random matrices the signal support is also recovered by $\ell_1$-minimization. %Figure~\ref{exp:fig2}  shows the results of applying the same experiments to larger sensing matrices.

\cb{
Figure~\ref{exp:fig3_1} plots statistical $0-1$ loss as a function of noise in the measurement domain and Figure~\ref{exp:fig3_2} does the same for noise in the data domain. In the measurement noise study, a ${\cal N}(0,\sigma^2)$ iid measurement noise vector is added to the sensed vector to obtain the $\mm$ dimensional vector $f$. The original $k$-sparse signal $\as$ is then approximated by solving the LASSO program with $\lambda=\cb{2\sqrt{2\log \n}\sigma^2}$, and basis pursuit with $\epsilon=2\mm\sigma^2$.  Following Lemma~\ref{bg:lemma1}, we use a similar method to study noise in the data domain. Figure~\ref{exp:fig3} shows that DG frames and sieves outperform random Gaussian matrices in terms of noisy signal recovery using the LASSO.
}

\section{Conclusion}
\label{sec:conc}
{ We have constructed two families of deterministic sensing matrices, $DG(m,r)$ frames and $DG(m,r)$ sieves, by exponentiating codewords from $\mathbb{Z}_4$ - linear Delsarte-Goethals codes. We have verified that the worst-case coherence and the spectral norm of these sensing matrices satisfy the conditions necessary for uniqueness of sparse representation and fidelity of $\ell_1$ reconstruction via the LASSO algorithm. We have presented numerical results that confirm performance predicted by the theory. These results show that DG frames and sieves outperform random Gaussian matrices in terms of noiseless and noisy signal recovery using the LASSO. Our focus here is on $\ell_1$ reconstruction using the LASSO algorithm but we note that the particular structure of the DG matrices leads to faster algorithms and to additional features such as local decoding and stronger guarantees on resilience to noise in the data domain.}

\section*{Acknowledgements}
The authors would like to thank Marco Duarte and Waheed Bajwa for sharing many valuable insights, and Waheed in particular for his help with the SpaRSA package.\bibliographystyle{IEEEbib}

\bibliography{lasso}

\begin{thebibliography}{10}

\bibitem{CRT1}
E.~Cand\`{e}s, J.~Romberg, and T.~Tao,
\newblock ``Stable signal recovery from incomplete and inaccurate
  measurements,''
\newblock {\em Communications on Pure and Applied Mathematics, Vol. 59 (8) ,
  pp. 1207-1223}, 2006.

\bibitem{Donoho}
D.~Donoho,
\newblock ``Compressed {S}ensing,''
\newblock {\em IEEE Transactions on Information Theory, Vol. 52 (4), pp.
  1289-1306}, April 2006.

\bibitem{CT}
E.~Cand\`{e}s and T.~Tao,
\newblock ``Near optimal signal recovery from random projections: Universal
  encoding strategies,''
\newblock {\em IEEE Transactions on Information Theory, Vol. 52 (12), pp.
  5406-5425}, December 2006.

\bibitem{CRT2}
E.~Cand\`{e}s, J.~Romberg, and T.~Tao,
\newblock ``Robust uncertainty principles: {E}xact signal reconstruction from
  highly incomplete frequency information,''
\newblock {\em IEEE Transactions on Information Theory, Vol. 52 (2), pp.
  489-509}, 2006.

\bibitem{best}
A.~Cohen, W.~Dahmen, and R.~DeVore,
\newblock ``Compressed sensing and best $k$-term approximation,''
\newblock {\em Journal of American Mathematical Society Vol. 22, pp. 211-231},
  2009.

\bibitem{Baraniuk07}
R.~Baraniuk, M.~Davenport, R.~DeVore, and M.~Wakin,
\newblock ``A simple proof of the restricted isometry property for random
  matrices,''
\newblock {\em Constructive Approximation, Vol 28 (3), pp. 253-263}, December
  2008.

\bibitem{Devore}
R.~A. DeVore,
\newblock ``Deterministic constructions of compressed sensing matrices,''
\newblock {\em Journal of Complexity, Vol. 23 (4-6), pp. 918-925},
  August-December 2007.

\bibitem{stromher}
T.~Strohmer and R.~W. Heath,
\newblock ``Grassmannian frames with applications to coding and
  communication,''
\newblock {\em Applied and Computational Harmonic Analysis, Vol. 14 (3), pp.
  257Ð275s}, May 2003.

\bibitem{Nowak}
W.~Bajwa, J.~Haupt, G.~Raz, S.~Wright, and R.~Nowak,
\newblock ``Toeplitz-structured compressed sensing matrices,''
\newblock {\em Statistical Signal Processing. IEEE/SP 14th Workshop on
  Publication, pp. 294-298}, August 2007.

\bibitem{sina}
S.~Jafarpour, W.~Xu, B.~Hassibi, and R.~Calderbank,
\newblock ``{Efficient compressed Sensing using Optimized Expander Graphs},''
\newblock {\em IEEE Transactions on Information Theory, Vol. 55 (9), pp.
  4299-4308.}, 2009.

\bibitem{BGIST}
R.~Berinde, A.~Gilbert, P.~Indyk, H.~Karloff, and M.~Strauss,
\newblock ``Combining geometry and combinatorics: a unified approach to sparse
  signal recovery.,''
\newblock {\em 46th Annual Allerton Conference on Communication, Control, and
  Computing, pp. 798-805}, September 2008.

\bibitem{negative}
V.~Chandar,
\newblock ``A negative result concerning explicit matrices with the restricted
  isometry property,''
\newblock {\em Preprint}, 2008.

\bibitem{tropp10}
J.~Tropp,
\newblock ``The {S}parsity {G}ap: {U}ncertainty {P}rinciples {P}roportional to
  {D}imension,''
\newblock {\em To appear, Proc. 44th Ann. IEEE Conf. Information Sciences and
  Systems (CISS)}, 2010.

\bibitem{CP07}
E.~Cand\`es and Y.~Plan,
\newblock ``Near-ideal model selection by $\ell_1$ minimization,''
\newblock {\em Annals of Statistics, Vol. 37, pp. 2145-2177}, 2009.

\bibitem{sparsa}
S.~Wright, R.~Nowak, and M.~Figueiredo,
\newblock ``Sparse reconstruction by separable approximation,''
\newblock {\em IEEE Transactions on Signal Processing, Vol. 57 (7), pp.
  2479-2493}, July 2009.

\bibitem{isit1}
R.~Calderbank, S.~Howard, and S.~Jafarpour,
\newblock ``Sparse reconstruction via the {R}eed-{M}uller sieve,''
\newblock {\em accepted to the International Symposium on Information Theory
  (ISIT)}, 2010.

\bibitem{strip}
R.~Calderbank, S.~Howard, and S.~Jafarpour,
\newblock ``Construction of a large class of {M}atrices satisfying a
  {S}tatistical {I}sometry {P}ropery,''
\newblock {\em IEEE Journal of Selected Topics in Signal Processing, Special
  Issues on Compressive Sensing, , Vol. 4 (2), pp. 358-374}, 2010.

\bibitem{L}
V.I. Levenshtein,
\newblock ``Bounds on the maximum cardinality of a code with bounded modulus of
  the inner product,''
\newblock {\em Soviet Math. Dokl. Vol. 25, pp.526-531}, 1982.

\bibitem{quad}
R.~Calderbank, S.~Howard, and S.~Jafarpour,
\newblock ``A sub-linear algorithm for {S}parse {R}econstruction with
  $\ell2/\ell2$ {R}ecovery {G}uarantees,''
\newblock {\em Preprint}, 2009.

\bibitem{GH}
Sh. Gurevich and R.~Hadani,
\newblock ``The statistical restricted isometry property and the {W}igner
  semicircle distribution of incoherent dictionaries,''
\newblock {\em submitted to the Annals of Applied Probability}, 2009.

\bibitem{magic}
E.~Cand\`{e}s and J.~Romberg,
\newblock ``$\ell_1$-magic: Recovery of sparse signals via convex
  programming,''
\newblock {\em available at \textit{http://www.acm.caltech.edu/l1magic}}, 2005.

\bibitem{TW}
J.~Tropp and S.~Wright,
\newblock ``Computational methods for sparse solution of linear inverse
  problems,''
\newblock {\em Technical Report No. 2009-01, California Institute of
  Technology}, 2009.

\bibitem{MS}
F.J. MacWilliams and N.J.A. Sloane,
\newblock {\em The {T}heory of {E}rror-{C}orrecting Codes},
\newblock North-Holland: Amsterdam, 1977.

\end{thebibliography}
\appendices
\cb{
\section{The Number of Solutions of Condition (C1)}
\label{sec:mw}
Let $DG_0(m,r)$ denote the set of all zero-diagonal matrices in $DG(m,r)$: 
$$DG_0(m,r)=\left\{\sum_{t=1}^r P^t(a_t)\, | a_t\in \mathbb{F}_2^m\,t=1,\cdots,r \right\}.$$ 
For every matrix $P$ in $DG_0(m,r)$, the vector $xPx^\top$ is a codeword of the linear binary code $\overline{DG}_0(m,r)$ which is a sub-code of the Delsarte-Goethals code.  Note that $\overline{DG}_0(m,r)$ has $2^{rm}$ codewords of length $2^m$. The following lemma shows how the number of solutions to~(C1) is related to the properties of this binary code.
\begin{lemma}
Let $\{W_0,\cdots,W_{\mm}\}$ denote the weight distribution of $\overline{DG}_0(m,r)$. Then the number of pairs $(x,x+e)$ satisfying~(C1) is equal to
\begin{equation}
\label{mw:eq1}
\frac{1}{2^{rm}} \sum_{i=0}^{\mm} W_i {\cal K}_2(i),
\end{equation}
where ${\cal K}_{\ell}(z)$ is the ${\ell}^{th}$ Krawtchouk polynomial, defined as
\begin{equation}
\label{mw:eq2}
{\cal K}_{\ell}(z)=\sum_{r=0}^{\ell} {z \choose r} {\mm-z \choose \ell-r}  (-1)^r.
\end{equation}
\end{lemma}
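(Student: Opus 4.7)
The plan is to reinterpret condition~(C1) as a parity check against the dual of $\overline{DG}_0(m,r)$, recognize the desired count as the number of weight-$2$ dual codewords, and then invoke the MacWilliams identity.

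The first step is to rewrite condition~(C1) as an $\mathbb{F}_2$-orthogonality condition on the code. Because every matrix in $DG_0(m,r)$ is symmetric with zero diagonal, the integer $xPx^\top$ is always even, so the map $P\mapsto c_P$, where $c_P(x)\doteq\tfrac{1}{2}xPx^\top\bmod 2$, is well-defined, $\mathbb{F}_2$-linear in $P$, and has image $\overline{DG}_0(m,r)$. Expanding
\begin{equation*}
(x+e)P(x+e)^\top - xPx^\top \,=\, 2xPe^\top + ePe^\top
\end{equation*}
and reducing modulo $4$ one verifies that this expression vanishes if and only if $c_P(x)=c_P(x+e)$. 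By Lemma~\ref{th:lemma2}, condition~(C1) is precisely this mod-$4$ vanishing for every generator $P^t(a)$ with $t\geq 1$ and $a\in\F$, and by $\mathbb{F}_2$-linearity of $P\mapsto c_P$ the same equality then holds for every $P\in DG_0(m,r)$.

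The second step is the bijective count: an unordered pair $\{x,x+e\}$ with $e\neq 0$ satisfies~(C1) if and only if the indicator vector $\mathbf{1}_x+\mathbf{1}_{x+e}\in\mathbb{F}_2^{\mm}$ is orthogonal to every codeword $c_P$, i.e.\ lies in the dual code $\overline{DG}_0(m,r)^\perp$. Weight-$2$ codewords of the dual are exactly such indicator vectors, so the number of pairs asked for equals $W_2^\perp$, the weight-$2$ coefficient of the dual weight enumerator.

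The third step is the MacWilliams identity, which for any binary linear code $C\subseteq\mathbb{F}_2^{\mm}$ with weight distribution $\{W_i\}$ gives
\begin{equation*}
W_\ell^\perp \,=\, \frac{1}{|C|}\sum_{i=0}^{\mm} W_i\,{\cal K}_\ell(i).
\end{equation*}
Setting $C=\overline{DG}_0(m,r)$, so that $|C|=2^{rm}$, and $\ell=2$ immediately yields~\eqref{mw:eq1}. The main (and only substantive) obstacle is the first step: one must carefully track the passage between $\mathbb{Z}_4$-arithmetic on the quadratic form $xPx^\top$ and $\mathbb{F}_2$-arithmetic on the binary code, and verify that the generator-wise condition~(C1) from Lemmas~\ref{th:lemma1}--\ref{th:lemma2} genuinely encodes an $\mathbb{F}_2$-linear parity check on $\overline{DG}_0(m,r)$. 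Once that identification is in place, the bijective count and the MacWilliams transform complete the argument.
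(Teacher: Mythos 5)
Your proposal is correct and follows essentially the same route as the paper: identify pairs satisfying~(C1) with weight-$2$ codewords of the dual of $\overline{DG}_0(m,r)$ and then apply the MacWilliams identity. The only difference is that you spell out the passage from the mod-$4$ condition of Lemma~\ref{th:lemma2} to the $\mathbb{F}_2$-parity-check interpretation, which the paper's proof leaves implicit.
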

\begin{IEEEproof}
Lemma~\ref{th:lemma2} implies that the number pairs $(x,x+e)$ satisfying Condition~(C1) is equal to the number of  duplicate rows in $\overline{DG}_0(m,r)$. The condition that the rows $x$ and $x+e$ are identical is equivalent to the condition that the vector with entry $1$ in positions $x$ and $x+e$, and zero elsewhere belongs to the dual code. The lemma now follows from the MacWilliams Identities \cite{MS} that relate relate the number of codewords of weight $2$ in the dual of $\overline{DG}_0(m,r)$ to the weight distribution of $\overline{DG}_0(m,r)$.
\end{IEEEproof}
Next we show that for the case $r=1$, the number of solutions to~(C1) only depends on the number of codewords with weight $2^{m-1}$ in $\overline{DG}_0(m,1)$:
\begin{theorem}
\label{mw:theorem1}
Let $m$ be an odd number and let $r$ equal $1$. Then the number of solutions to~(C1) is $2^m-1-s$ where $s$ is the number of codewords with weight $2^{m-1}$ in $\overline{DG}_0(m,1)$.
\end{theorem}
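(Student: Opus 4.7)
The plan is to combine the MacWilliams-based counting formula from the preceding lemma with a tight description of the weight distribution of $\overline{DG}_0(m,1)$. By that lemma the number of (C1)-solutions equals $\frac{1}{2^m}\sum_{i=0}^{2^m}W_i\,\mathcal{K}_2(i)$, and a direct rearrangement of~\eqref{mw:eq2} gives the compact form $\mathcal{K}_2(z)=\bigl((2^m-2z)^2-2^m\bigr)/2$. This takes the value $2^{2m-1}-2^{m-1}$ at $z=0$, the value $-2^{m-1}$ at the central weight $z=2^{m-1}$, and the common value $2^{m-1}$ at the two ``Gold'' weights $z=2^{m-1}\pm 2^{(m-1)/2}$. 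So if every codeword weight of $\overline{DG}_0(m,1)$ lies in this four-element set, the MacWilliams sum will collapse into an expression involving only $s=W_{2^{m-1}}$.

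The heart of the argument is therefore the weight spectrum. Applying Proposition~\ref{th:prop1} with $t=1$ and dividing by $2$, the codeword associated with $a\in\mathbb{F}_{2^m}$ can be written $c_a(x)=\Tr(ax^3)+z_a\cdot x\pmod{2}$; by non-degeneracy of the trace the linear term can be absorbed as $\Tr(b_a x)$ for a unique $b_a$, so $c_a(x)=\Tr(ax^3+b_a x)$. The weight of $c_a$ equals $2^{m-1}-\tfrac{1}{2}S(a,b_a)$, with Gold--Weil sum $S(a,b)=\sum_{x\in\mathbb{F}_{2^m}}(-1)^{\Tr(ax^3+bx)}$. I would evaluate $|S(a,b)|$ by the classical squaring trick: substituting $y=x+u$ in $S(a,b)^2$ and using $\Tr(w)=\Tr(w^2)$ collapses the inner sum over $x$ to a character sum equal to $2^m$ precisely when $\sqrt{au}+au^2=0$, i.e.\ when $u=0$ or $u^3=a^{-1}$. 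The odd-$m$ hypothesis is essential here: $\gcd(3,2^m-1)=1$, so cubing is a bijection on $\mathbb{F}_{2^m}^*$ and the second equation has exactly one solution. The resulting identity $S(a,b)^2=2^m\bigl(1+(-1)^{\Tr(1+ba^{-1/3})}\bigr)$ forces $S(a,b)\in\{0,\pm 2^{(m+1)/2}\}$, which produces exactly the three nonzero weights. For $a=0$ the codeword vanishes, and for $a\neq 0$ the cubic piece $\Tr(ax^3)$ cannot equal a linear function, so $c_a\not\equiv 0$; this gives $W_0=1$ and injectivity of $a\mapsto c_a$.

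With the spectrum in hand the computation closes quickly. The two extremal Gold weights together account for $2^m-1-s$ codewords and $\mathcal{K}_2$ takes the common value $2^{m-1}$ on each, so $\sum_i W_i\,\mathcal{K}_2(i)=(2^{2m-1}-2^{m-1})-s\cdot 2^{m-1}+(2^m-1-s)\cdot 2^{m-1}=2^m(2^m-1-s)$; dividing by $2^{rm}=2^m$ yields the claimed count $2^m-1-s$. The main obstacle is the Gold-sum evaluation itself, which depends on the number-theoretic fact $\gcd(3,2^m-1)=1$ for odd $m$ and on correctly handling the linear correction $z_a\cdot x$ supplied by Proposition~\ref{th:prop1}; once these are in hand the remainder is bookkeeping.
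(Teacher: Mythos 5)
Your proof is correct and follows the paper's strategy in its essentials: determine the weight spectrum of $\overline{DG}_0(m,1)$ by squaring the exponential sum $\sum_x(-1)^{\Tr(ax^3+b_ax)}$ (the odd-$m$ hypothesis giving the unique cube root $u^3=a^{-1}$ and hence a two-element null space, so the sum lies in $\{0,\pm 2^{(m+1)/2}\}$ and the weights are $0$, $2^{m-1}$, $2^{m-1}\pm 2^{(m-1)/2}$), then feed that spectrum into the Krawtchouk formula ${\cal K}_2(z)=\bigl((2^m-2z)^2-2^m\bigr)/2$. Where you genuinely improve on the paper is the final bookkeeping: the paper separately determines the multiplicities $t,t'$ of the two Gold weights by invoking the ${\cal K}_1$ MacWilliams identity together with the count of $m+1$ weight-one dual codewords, whereas you observe that ${\cal K}_2$ is symmetric about $z=2^{m-1}$ and takes the common value $2^{m-1}$ at both Gold weights, so only the combined count $t+t'=2^m-1-s$ is needed and it comes for free from the total number of codewords. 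This shortcut eliminates an entire step of the paper's argument (and the dependence on the zero-diagonal/weight-one-dual-codeword count at that stage) without losing anything, since the final answer indeed depends only on $s$.
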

\begin{IEEEproof}
We start by calculating the rank of matrices in $DG_0(m,1)$: Let $a$ be a fixed element of $\mathbb{F}_2^m$. A field element $x$ is in the null space of $P_a$ if and only if for every field element $y$, $xP_ay^\top=0$.  Using Equation~\ref{field_eq}, this condition can be translated to the condition
$$\Tr\left((xy^2+x^2y)a\right)=0 \mbox{  for all }y.$$  Since $\Tr(x)=\Tr(x^2)$ the condition further reduces to $$\Tr\left((xa+x^4a^2)y^2\right)=0 \mbox{  for all }y.$$ Non-degeneracy of the trace implies that $x^4+\frac{x}{a}=0$, which, since $m$ is odd, has the unique solution $x^3=\frac{1}{a}$.

Now let $S=\sum_{x\in \mathbb{F}_2^m} \imath^{xP_a x^\top}$. Since $xP_a x^\top$ is a binary codeword, we have $S^2 = \left(\mm-2w_a\right)^2$, where $w_a$ is the weight of the codeword determined by $P_a$.  It has been proved in \cite{strip} that $S^2=2^m \sum_{e:eP_a=0} \imath^{eP_ae^\top}$. We provide the proof here for completeness:

We have
\begin{eqnarray}
\nonumber S^2=\sum_{x,y} i^{xP_ax^\top+yP_ay^\top}= \sum_{x,y}i^{(x+y)P_a(x+y)^\top+2{x}P_ay^\top}
\end{eqnarray}
Changing variables to $z=x\oplus y$ and $y$ gives
$$S^2=\sum_{z} i^{zP_az^\top} \sum_y(-1)^{zP_ay^\top}=2^m\sum_{z:zP_a=0}\imath^{zP_az^\top}.$$
 The null space of $P_a$ has only two elements $0$ and $a^{-\frac{1}{3}}$. As a result $$S^2=2^m\left(1+\imath^{a^{-\frac{1}{3}}P_a {a^{\frac{1}{3}}}^\top} \right).$$
There are two cases; $S^2$ is either $0$ or $2^{m+1}$.

\textbf{Case 1:} $S$ is zero. This case provides one possible weight value: $w_a=2^{m-1}$.\\
\textbf{Case 2:} $|S|^2=2^{m+1}$. Therefore $2^m-2w_a=\pm 2^{\frac{m+1}{2}}$.  This case provides two distinct weight values: $w_a=2^{m-1}\pm 2^{\frac{m-1}{2}}$. 

Hence $DG_0(m,1)$ has exactly four distinct weights $\langle 0,2^{m-1}- 2^{\frac{m-1}{2}}, 2^{m-1}, 2^{m-1}+ 2^{\frac{m-1}{2}}\rangle$. Let $\langle 1,t,s,t'\rangle$ denote the corresponding weight distribution. We can use the MacWilliams identities to find the values of $t$ and $t'$ as a function of $s$. First, note that the dual code has exactly one codeword of weight $0$. Using MacWilliams identities with Krawtchouk polynomial ${\cal K}_{0}(z)=1$, gives the equation $1+t+s+t'=\n$. Second, since all matrices in $DG_0(m,r)$ are zero-diagonal, for every field element $a$ and for every index $j$ in $\{0,\cdots,m\}$, $\xi^j P_a {\xi^j}^\top=0$, the dual code has exactly $m+1$ codewords of weight $1$. Again,  MacWilliams identities, with Krawtchouk polynomial ${\cal K}_{1}(z)=\mm-2z$ gives the equation  $(m+1)\mm=\mm+\sqrt{2\mm}(t'-t)$. This equation can be simplified to $t-t'=m\,2^{\frac{m-1}{2}}$. Solving $t$ and $t'$ with respect to $s$ gives $t=\frac{2^m-1-s+m2^{\frac{m-1}{2}}}{2}$ and $t'=\frac{2^m-1-s-m2^{\frac{m-1}{2}}}{2}$. The theorem then follows from substituting  the values $t,s,t'$ into Equation~\eqref{mw:eq2}, and simplifying the expression using the Krawtchouk polynomial ${\cal K}_{2}(z)=\frac{\left(\mm-2z\right)^2-\mm}{2}$.
\end{IEEEproof}

}
\end{document}